\newcommand{\bbC}{{\mathbb{C}}}
\newcommand{\bbD}{{\mathbb{D}}}
\newcommand{\bbN}{{\mathbb{N}}}
\newcommand{\bbR}{{\mathbb{R}}}
\newcommand{\bbT}{{\mathbb{T}}}
\newcommand{\bbZ}{{\mathbb{Z}}}
\newcommand{\rmd}{{\mathrm{d}}}
\DeclareMathOperator{\sgn}{sgn}
\DeclareMathOperator{\spectrum}{spec}
\DeclareMathOperator{\separate}{sep}
\DeclareMathOperator{\diag}{diag}
\DeclareMathOperator{\imaginary}{Im}
\renewcommand{\Im}{\imaginary}
\DeclareMathOperator{\annulus}{{\mathcal{A}}}
\newcommand{\floquet}{{\mathscr{F}}}
\newcommand{\iop}{{\mathbf{i}}}
\newcommand{\LR}{{\mathrm{LR}}}
\newcommand{\asy}{{\mathrm{asy}}}
\newtheorem{theorem}{Theorem}[section]
\newtheorem{lemma}[theorem]{Lemma}
\newtheorem{coro}[theorem]{Corollary}
\theoremstyle{definition}
\newtheorem{remark}[theorem]{Remark}
\numberwithin{equation}{section}
\title[Sharp Velocities for Periodic Operators]{Sharp Polynomial Velocity Decay Bounds for Multidimensional Periodic Schr\"odinger Operators}
\author[H.\ Abdul-Rahman]{Houssam Abdul-Rahman}
\address{[H.\ Abdul-Rahman] Department of Mathematical Sciences, United Arab Emirates University, Al Ain, UAE}
\email{\href{mailto:houssam.a@uaeu.ac.ae}{houssam.a@uaeu.ac.ae}}
\author[J.\ Fillman]{Jake Fillman}
\address{[J.\ Fillman] Department of Mathematics, Texas A\&M University, College Station, TX 77843-3368, USA}
\email{\href{mailto:fillman@tamu.edu}{fillman@tamu.edu}}
\author[C. Fischbacher]{Christoph Fischbacher}
\address{[C. Fischbacher] Department of Mathematics, Baylor University, Sid Richardson Bldg, 1410 S. 4th Street, Waco, TX 76706, USA }
\email{\href{mailto:c\_fischbacher@baylor.edu}{c\_fischbacher@baylor.edu}}
\author[W. Liu]{Wencai Liu}
\address{[W. Liu], Department of Mathematics, Texas A\&M University, College Station, TX 77843-3368, USA}
\email{\href{mailto:wencail@tamu.edu}{wencail@tamu.edu}}
\date{}
\thanks{H. A.-R. was supported in part by the UAE University under grant number G00004622.}
\thanks{J.\ F.\ was supported in part by the National Science Foundation under grant DMS--2513006 and by the  Simons Foundation under grant SFI-MPS-TSM-00013720.}
\thanks{C.\ F.\ was supported by the National Science Foundation under grant DMS--2510063.}
\thanks{W.\ L.\ was supported in part by the  National Science Foundation under grants  DMS-2246031  and DMS-2052572, by a Simons Fellowship in Mathematics, and by a Visiting Miller Professorship from the Miller Institute for Basic Research in Science, University of California, Berkeley.}
\begin{document}

\begin{abstract}
We investigate periodic Schr\"odinger operators in arbitrary dimensions in the large coupling regime. Our results establish that both the Lieb--Robinson velocity and the asymptotic velocity decay at an inverse polynomial rate in the coupling, with the precise exponent determined by the period of the underlying potential. In particular, we obtain sharp polynomial decay rates that capture the precise dependence on the periodic structure. 
\end{abstract}

\maketitle


\hypersetup{
	linkcolor={black!30!blue},
	citecolor={red},
	urlcolor={black!30!blue}
}

\section{Introduction}
Periodic quantum systems, modeled by Schr\"odinger operators with periodic potentials, are typically associated with ballistic transport, where wave packets spread linearly in time due to the underlying translational symmetry. These operators rigorously model quantum dynamics in crystalline and structured media, capturing the effects of translational symmetry on spectral and transport properties. They provide a natural setting for analyzing the relationship between spectral types and dynamical behavior. Surprisingly, recent work has revealed that while periodic systems do exhibit ballistic transport, the associated transport speed can be asymptotically small. This raises new questions about the subtle interplay between periodicity, spectral structure, and propagation speed.

The mathematical study of periodic Schr\"odinger operators has a long and rich history, tracing back to the foundational works of Bloch and Floquet in the early 20th century \cite{Bloch1929, Floquet1883}. These contributions established that solutions to the time-independent Schr\"odinger equation with periodic potentials can be expressed in terms of Bloch waves, leading to the band theory of solids and the identification of spectral bands and gaps \cite{AshcroftMermin}. From a mathematical perspective, periodic Schr\"odinger operators became central objects in spectral theory, particularly following the mid-20th century, with key developments such as the characterization of absolutely continuous spectra and the formulation of Floquet–Bloch theory, which reduces the spectral analysis to a direct integral over quasi-momentum space \cite{ReedSimonIV, Eastham1973}. In recent decades, attention has expanded beyond static spectral properties to include dynamical behavior, such as quantum transport and wavepacket spreading, with periodic systems often exhibiting ballistic dynamics and dipersive spreading under suitable conditions \cite{AschKnauf1998Nonlin, AVWW2011JMP, BoutetSabri2023OTAA, DFO2016JMPA, DFY2025disperse, DLY2015CMP, Fillman2021OTAA, SKW2025disperse}. These studies have deepened our understanding of the mathematical structure underlying quantum systems and have had significant impact on theoretical and applied physics, particularly in the design and analysis of photonic crystals and metamaterials \cite{Kuchment2001, Joannopoulos2008}. We refer the reader to the survey \cite{Kuchment2016BAMS} for a comprehensive overview and additional references and to \cite{DMY202XJST} for additional details regarding ballistic motion.  More recently, interdisciplinary approaches combining spectral theory with techniques from algebra and combinatorics have led to breakthroughs beyond what could be achieved through spectral methods alone. For example, some studies have focused on the irreducibility of associated algebraic varieties and its spectral consequences \cite{FLM2022JFA, Fillman2024JFA, FisLiShi2021CMP, LiShipman2020LMP, Liu2022GAFA, Shipman2020JST}.

While periodic Schr\"odinger operators are classically associated with ballistic transport and absolutely continuous spectrum, recent results have revealed a more nuanced picture. In particular, it has been shown that although wave packets in periodic systems do spread linearly in time, the rate of this spreading (the group velocity) can be made arbitrarily small. This phenomenon, first rigorously demonstrated in one-dimensional settings \cite{ADFS}, shows that the group velocity can become vanishingly small as the variations in a non-degenerate periodic potential are made increasingly large. A similar phenomenon has also been observed in quantum walks in periodic fields, as shown in the recent work \cite{AbdulRahmanStolz2023CMP}, where it was proved that the propagation velocity remains positive yet can be exponentially suppressed by carefully tuning the periodic structure. 
Moreover, existing proofs often rely heavily on fairly delicate one-dimensional techniques, particularly in the analysis of Lieb-Robinson-type velocity bounds, so the analysis of this phenomenon in higher dimensions was not understood and would indeed require substantial novel ideas and insights. 
This raises the question of whether more general or structurally simpler arguments can reveal the same effect and offer broader insight into the relationship between spectral structure and transport speed in periodic media.

In this work, we address the higher-dimensional setting by employing a novel synthesis of complex analysis and perturbation theory to eschew the delicate one-dimensional arguments used previously. 
Specifically, we show that for non-degenerate periodic potentials, the propagation velocity can be made arbitrarily small by increasing the amplitude of the potential, and we give a \emph{sharp} rate of decay.
This approach not only broadens the dimensional scope of the phenomenon but \emph{also} gives sharp estimates on the velocities and \emph{furthermore} even yields a simpler and more transparent proof in the one-dimensional setting.

The remainder of the paper is organized as follows.
In Section~\ref{sec:result}, we recall some terminology and formulate our main results precisely.
We use Rayleigh--Schr\"odinger perturbation series in Section~\ref{sec:rayleigh} to derive some useful estimates on eigenvalues, which we then employ in Section~\ref{sec:proof} to prove the main results.


\subsection*{Acknowledgements} The authors are grateful to G\"unter Stolz for helpful discussions.
   W.L.  thanks the Department of Mathematics at UC Berkeley for its hospitality where part of this work were done during his visits in Fall 2024 and Fall 2025.

\section{Setting and Main Results} \label{sec:result}
We consider $V: \bbZ^d \to \bbR$ and the operator $H = \Delta + V$ on $\ell^2(\bbZ^d)$ given by
\[
[H\psi](n)
= V(n) \psi(n) + \sum_{|n-m|_1=1} \psi(m), \quad n \in \bbZ^d,
\]
for which the function $V:\bbZ^d \to \bbR$ is \emph{periodic}.
Given $p = (p_1,\ldots,p_d) \in \bbN^d$, we denote the period lattice by
$p\bbZ^d := p_1 \bbZ \oplus \cdots \oplus p_d \bbZ$.
We say that $V$ is $p$-periodic\footnote{One could say that $V$ is periodic if there exists some full-rank lattice $\Gamma \subseteq \bbZ^d$ such that $V(\cdot-\gamma)=V$ for every $\gamma \in \Gamma$. 
However, it is not hard to check that any such $\Gamma$ contains a subgroup of the form $p \bbZ^d$ for suitable $p$ and hence no generality is lost.}  if $V(n+m)=V(n)$ for all $n \in \bbZ^d$ and $m \in p \bbZ^d$.
We define the fundamental cell by putting $[\ell] = \{0,1,\ldots, \ell -1\}$ for $\ell \in \bbN$ and
\begin{eqnarray}
    W = [p_1] \times [p_2] \times \cdots\times [p_d],
\end{eqnarray}
and we say that $V$ is \emph{non-degenerate} if $V$ is injective on $W$.
The \emph{separation} of $V$ is defined by
\begin{equation}
    \separate V 
    = \min\{|V(n) - V(m) |: n,m \in W, \ n \neq m \}.
\end{equation}
One has $\separate (\mu V) = |\mu| \separate (V)$ for each $\mu \in \bbR$ and we see that $V$ is non-degenerate if and only if $\separate(V) > 0$.

Our main results concern the unitary evolution $e^{-{\iop}tH}$, which is known to propagate \emph{ballistically} in the sense that the evolution of the position operator, $X$, satisfies:
\begin{equation}
    \tfrac{1}{t}\underbrace{e^{{\iop}tH} X e^{-{\iop}tH}}_{=:X(t)} \xrightarrow{\ \mathrm{s} \ } G,
\end{equation}
with $G$ a self-adjoint operator having trivial kernel.
This was proved first for continuum Schr\"odinger operators \cite{AschKnauf1998Nonlin}, and later generalized to many other systems, including discrete Schr\"odinger operators, quantum walks, and more general periodic graph operators \cite{AVWW2011JMP, BoutetSabri2023OTAA, DFO2016JMPA, DLY2015CMP, Fillman2021OTAA}.

To study transport in the strong coupling regime, we introduce the scaled periodic Schrödinger 
\begin{equation}
H_\mu := \Delta + \mu V,
\end{equation}
where the coupling parameter $\mu>0$ is taken to be large.  Note that the introduction of a large parameter $\mu$ amplifies the variations in the potential landscape. Physically, this corresponds to placing increasingly high barriers between sites of different potential values. As $\mu$ increases, tunneling between distinct potential sites is suppressed. Consequently, the propagation of the quantum state slows down, and in the limit $\mu \to \infty$, one expects a strong suppression of transport.

One measure of the rate of spreading is quantified by the \emph{asymptotic velocity}.
For a state $\psi \in D(X)$, we denote
\begin{eqnarray} \label{eq:vAsyDef}
    v_\asy(H,\psi) := \limsup_{t \to \infty} \frac{1}{t} \|X e^{-\iop t H} \psi\|,
\end{eqnarray}
where $X:D(X) \subseteq \ell^2(\bbZ^d) \to (\ell^2(\bbZ^d))^d$ denotes the (vector-valued) position operator
\begin{eqnarray}
    X\psi = (X_1\psi,\ldots, X_d\psi), \quad [X_i \psi](x)  = x_i \psi(x).
\end{eqnarray}
We then define the \emph{asymptotic velocity} of $H$ to be
\begin{equation}\label{def:v-asy}
    v_\asy(H)=
    \sup_{\substack{\psi \in D(X) \\ \|\psi\|=1}} v_\asy(H,\psi).
\end{equation}
We use here the refined version \eqref{def:v-asy}, involving the supremum over normalized states in the domain of $X$, which was first considered in an ongoing work on the velocity of unitary models (quantum walks and CMV matrices) \cite{ARCSW-QW-CMV}.

Here we note that when $H$ is periodic, the limit of the quantity on the right-hand-side of \eqref{eq:vAsyDef} exists so one can replace $\limsup$ by $\lim$ in the setting in which we work.

Our main result shows a \emph{sharp} rate of decay for $v_\asy$ in the regime of large coupling constant.

\begin{theorem} \label{t:asymptotic}
Suppose $V:\bbZ^d \to \bbR$ is $p$-periodic and non-degenerate with $p_0 := \min\{p_i : i=1,2,\ldots,d\}$.
Then, as $\mu \to \infty$, one has
\begin{equation} \label{eq:vAsyBounds}
     v_\asy(H_\mu)= C\mu^{-p_0+1} + O(\mu^{-p_0}),
\end{equation}
where $C>0$ is a constant depending on $V$ and $p$. Moreover,
\begin{equation} \label{eq:vAsydeltaBounds}
     v_\asy(H_\mu,\delta_0) = c\mu^{-p_0+1} + O(\mu^{-p_0})
\end{equation}
for a constant $c>0$ depending on $V$ and $p$.
\end{theorem}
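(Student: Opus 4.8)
The plan is to reduce the problem to the Floquet--Bloch fibers of $H_\mu$, extract the large-$\mu$ asymptotics of the Bloch eigenvalues from Rayleigh--Schr\"odinger perturbation theory, and then read off the velocities from the group-velocity description of $v_\asy$. Via the Floquet transform one identifies $\ell^2(\bbZ^d)$ with $\int_{\bbT^d}^\oplus \ell^2(W)\,\rmd\theta$ so that $H_\mu$ becomes $\int_{\bbT^d}^\oplus H_\mu(\theta)\,\rmd\theta$, where $H_\mu(\theta) = \mu D_V + \Delta(\theta)$ on $\ell^2(W)$, $D_V = \diag(V(n))_{n\in W}$, and $\Delta(\theta)$ is the nearest-neighbour hopping matrix on the discrete torus $\bbZ^d/p\bbZ^d$ whose only $\theta$-dependence is that each hop wrapping once around in coordinate direction $k$ carries a phase $e^{\pm\iop\theta_k}$. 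Since $V$ is non-degenerate, the unperturbed eigenvalues $\{\mu V(n)\}_{n\in W}$ have pairwise gaps at least $\mu\,\separate(V)$ whereas $\|\Delta(\theta)\| \le 2d$; hence for $\mu$ large each $H_\mu(\theta)$ has $|W|$ simple eigenvalues $E_n(\theta;\mu)$, $n\in W$, lying in pairwise disjoint discs about $\mu V(n)$, which together with their normalized eigenvectors $u_n(\theta)$ are real-analytic in $\theta$ (no band crossings), and the Rayleigh--Schr\"odinger series in $\mu^{-1}$ converges uniformly in $\theta$:
\[
  E_n(\theta;\mu) = \mu V(n) + \sum_{r\ge 1}\mu^{1-r}e_n^{(r)}(\theta),
\]
with each $e_n^{(r)}$ real-analytic and, by Cauchy estimates, $\sup_\theta\bigl(|e_n^{(r)}(\theta)| + |\nabla_\theta e_n^{(r)}(\theta)|\bigr) \le C_1\rho^{-r}$ for fixed $C_1,\rho>0$.

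The perturbative heart of the argument, carried out in Section~\ref{sec:rayleigh}, is to analyze the coefficients $e_n^{(r)}$. Using the path form of higher-order Rayleigh--Schr\"odinger theory --- equivalently, expanding the resolvent in $E_n(\theta;\mu) = \frac{1}{2\pi\iop}\oint_{\Gamma_n} z\,\mathrm{tr}\bigl[(z-H_\mu(\theta))^{-1}\bigr]\,\rmd z$ --- each $e_n^{(r)}(\theta)$ is a sum over closed nearest-neighbour walks of length at most $r$ based at $n$ in $\bbZ^d/p\bbZ^d$, weighted by products of energy denominators $V(n)-V(\cdot)$, and the $\theta$-dependence enters only through the winding phase $e^{\iop\sum_k w_k\theta_k}$ of the walk. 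A nonzero winding $w_k$ forces at least $p_k\ge p_0$ steps, so $e_n^{(r)}$ is $\theta$-independent for $r<p_0$ (if $p_0=1$ one instead gets $e_n^{(1)}(\theta)=2\sum_{k:p_k=1}\cos\theta_k$ from the length-one loops), and at order $r=p_0$ the only winding walks are the straight loops once around a direction $k$ with $p_k=p_0$; each such loop is unique, so nothing cancels, and a residue computation gives
\[
  e_n^{(p_0)}(\theta) = a_n + \sum_{k\,:\,p_k=p_0}\frac{2\cos\theta_k}{\prod_{j=1}^{p_0-1}\bigl(V(n)-V(n+je_k)\bigr)},
\]
with $a_n$ independent of $\theta$. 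By non-degeneracy the sites $n+je_k$ with $1\le j\le p_0-1<p_k$ are distinct from $n$ in $W$, so every denominator is nonzero and $\nabla_\theta e_n^{(p_0)}$ fails to vanish identically, for every $n$. Differentiating the series term by term (justified by the Cauchy bounds) yields, uniformly in $n$ and $\theta$,
\[
  \nabla_\theta E_n(\theta;\mu) = \mu^{1-p_0}\,\nabla_\theta e_n^{(p_0)}(\theta) + O(\mu^{-p_0}).
\]

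It remains to feed this into the group-velocity formula. In the Floquet representation $X$ acts (up to a fixed rescaling of the angular variables by the periods $p$) as a bounded multiplication operator plus the momentum operator $\iop\,p\odot\nabla_\theta$; writing $e^{-\iop tH_\mu(\theta)}=\sum_n e^{-\iop tE_n(\theta;\mu)}P_n(\theta)$ with $P_n(\theta)$ the rank-one spectral projections and keeping the term of order $t$ --- the inter-band contributions persist only at lower order in $t$ because the vectors $P_n(\theta)\widehat\psi(\theta)$ are mutually orthogonal --- one obtains
\[
  v_\asy(H_\mu,\psi)^2 = \int_{\bbT^d}\sum_{n\in W}\bigl|p\odot\nabla_\theta E_n(\theta;\mu)\bigr|^2\,\bigl|\langle u_n(\theta),\widehat\psi(\theta)\rangle\bigr|^2\,\rmd\theta ,
\]
and optimizing over normalized $\psi\in D(X)$ gives $v_\asy(H_\mu)=\max_{n\in W,\,\theta\in\bbT^d}\bigl|p\odot\nabla_\theta E_n(\theta;\mu)\bigr|$. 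Substituting the expansion of $\nabla_\theta E_n$ proves \eqref{eq:vAsyBounds} with $C=\max_{n,\theta}|p\odot\nabla_\theta e_n^{(p_0)}(\theta)|>0$. For $\psi=\delta_0$ the Floquet transform $\widehat{\delta_0}(\theta)$ is the constant function equal to the coordinate vector $\delta_0\in\ell^2(W)$, so $|\langle u_n(\theta),\widehat{\delta_0}(\theta)\rangle|^2$ equals $1+O(\mu^{-1})$ for $n=0$ and $O(\mu^{-2})$ otherwise; hence only the band $n=0$ contributes at leading order, giving $v_\asy(H_\mu,\delta_0)^2=\mu^{2-2p_0}\int_{\bbT^d}|p\odot\nabla_\theta e_0^{(p_0)}(\theta)|^2\,\rmd\theta+O(\mu^{1-2p_0})$, and since the integral is strictly positive, \eqref{eq:vAsydeltaBounds} follows upon taking a square root.

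The step I expect to be the main obstacle is the perturbative lemma of Section~\ref{sec:rayleigh}: one must control the higher-order Rayleigh--Schr\"odinger corrections carefully enough to see that the ``renormalization'' contributions built from lower corrections $e_n^{(r_1)}$ with $r_1<p_0$ remain $\theta$-independent (they factor through closed walks shorter than $p_0$, which cannot wind), and then verify that the unique straight winding loops contribute a coefficient which is manifestly nonzero --- it is precisely here that the resolvent/contour-integral formulation keeps the combinatorics transparent, which is the complex-analytic ingredient advertised in the introduction. A more routine secondary point is to make the group-velocity identity and the uniform term-by-term differentiation rigorous, which is clean here only because for large $\mu$ the bands are simple and real-analytic in $\theta$.
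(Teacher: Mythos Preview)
Your proposal is correct and follows essentially the same route as the paper: Floquet reduction to fibers $H_\mu(\theta)$, Rayleigh--Schr\"odinger expansion with the loop/path interpretation (the paper derives this by induction on the recursion rather than via the resolvent contour, but the outcome is identical), identification of the first $\theta$-dependent coefficient at order $p_0$ via the straight winding loops with the explicit product of energy denominators, and then the group-velocity formula to extract $v_\asy$ and $v_\asy(\cdot,\delta_0)$. The only cosmetic difference is that the paper rescales to $A_\varepsilon=\varepsilon\Delta+V$ and works componentwise with the operators $G_i$, whereas you work directly with $H_\mu$ and the vector $p\odot\nabla_\theta E_n$; these are equivalent.
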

In fact, our argument gives more information than is stated in the theorem in the sense that the asymptotic velocity in each coordinate direction is quantified explicitly as a function of the period in that direction (cf.\ \eqref{eq:GinormExpansion}). 
Broadly speaking, transport is suppressed the most in directions with larger periods.
For example, in a two dimensional system, if $p_1=1$ and $p_2>1$, then the potential is constant along each line parallel to the horizontal axis, creating a propagating channel along which the wave packet freely travels, leading to the lack of decay of $v_\asy$ seen above. However, the transport in the vertical axis is suppressed in the sense that
$\lim X_2(t)/t$ has order $\mu^{-p_2+1}$.

{A related estimate was obtained in the one-dimensional setting 
in~\cite{ADFS}, where an upper bound of the form 
$v_{\mathrm{asy}}(H_\mu) \leq C \mu^{-p+1}$ was established (where $p$ is the period of the potential). 
Theorem \ref{t:asymptotic} provides not only a bound above and below but a sharp polynomial (in $\mu$) asymptotic statement in \emph{arbitrary} dimension.}

One can also quantify the notion of propagation (transport) in quantum systems using a single-body analogue of the \emph{Lieb-Robinson velocity} \cite{LiebRobinson1972CMP}, of the form
\begin{equation} \label{eq:genericLRbound}
    |\langle \delta_n, e^{-\iop tH_\mu} \delta_m \rangle| \lesssim e^{-\rho_0 (|n - m|_1 - v_{\LR} |t|)}, \quad \text{for all } n,m \in \bbZ^d, \ t \geq 0,
\end{equation}
for suitable constants $\rho_0, v_{\LR} > 0$. Here and in the following, $\{\delta_n : n\in\bbZ^d\}$ denotes the canonical basis of $\ell^2(\bbZ^d)$. Inequality (\ref{eq:genericLRbound}) implies that $v_{\LR}$ serves as an effective upper bound on the speed of information propagation, up to exponentially decaying corrections. 
The physical interpretation is as follows \cite{AzeimanWarzel-JMP12}: starting from the localized initial state $\delta_m$, the probability that the distance traveled exceeds $v|t|$ at time $t$, for any velocity $v > 0$, is given by
\begin{eqnarray}
\sum_{n:\, |n - m|_1 > v|t|} |\langle \delta_n, e^{-\iop tH_\mu} \delta_m \rangle|^2 
&\lesssim& 
e^{-2\rho |t| (v - v_{\LR})}\sum_{n:\, |n - m|_1 > v|t|} e^{-2\rho (|n - m|_1 - v |t|)} \notag\\
&\lesssim& e^{-2\rho |t| (v - v_{\LR})}.
\end{eqnarray}
Thus, for any $v > v_{\LR}$, the probability of observing propagation beyond the linear bound $|n - m|_1 = v|t|$ decays exponentially in time. This defines a so-called \emph{Lieb-Robinson light cone} with effective velocity $v_{\LR}$, see e.g., \cite{Arbunich2023CPDE, Breteaux2023arxiv, CJWWpreprint, Tran2021PRL}.

Our second main result is a bound on the Lieb-Robinson velocity $v_\LR$ decaying at the \emph{sharp} rate $\mu^{-p_0+1}$ as $\mu \to \infty$, with $p_0:=\min\{p_1,\ldots,p_d\}$.

\begin{theorem} \label{t:LiebRob}
Suppose $V:\bbZ^d \to \bbR$ is $p$-periodic and non-degenerate with $p_0 := \min\{p_i: i=1,2,\ldots,d\}$.
Then, for \emph{any} $\rho_0>0$, there are constants $C,C_1,\mu_0>0$ depending on $d$, $p$, $\separate V$, and $\rho_0$ such that \eqref{eq:genericLRbound} holds with $v_\LR = C_1\mu^{-p_0+1}$. That is, for all $n,m \in \bbZ^d$ and all $\mu \geq \mu_0$, one has
        \begin{equation} \label{eq:mainBound}
        | \langle \delta_n, e^{-{\iop}tH_\mu} \delta_m \rangle|
        \leq    C e^{-\rho_0(|n-m|_1-C_1 \mu^{-p_0+1}|t|)}
        \end{equation}
for all $t \geq 0$.

Furthermore, this is optimal in the sense that if a bound of the form \eqref{eq:genericLRbound} holds for all $\mu$ large, then $v_\LR \geq c\mu^{-p_0+1}$ for a constant $c>0$.
    \end{theorem}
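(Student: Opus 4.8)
The plan is to prove the upper bound \eqref{eq:mainBound} via Floquet--Bloch theory combined with a deformation of the quasimomentum contour into the complex domain, and to obtain the optimality claim by comparison with the asymptotic velocity controlled in Theorem~\ref{t:asymptotic}. For \eqref{eq:mainBound} I would use the Floquet transform $\floquet$ associated to the lattice $p\bbZ^d$: writing $n = p\cdot\kappa(n) + \iota(n)$ with $\kappa(n) \in \bbZ^d$ and $\iota(n) \in W$, this transform diagonalizes $H_\mu$ as a direct integral $\int^\oplus_{\bbT^d} H_\mu(\theta)\,\rmd\theta$ over $N := |W|$-dimensional fibers and produces a representation of the form
\begin{equation*}
\langle\delta_n, e^{-\iop t H_\mu}\delta_m\rangle = \frac{1}{(2\pi)^d}\int_{\bbT^d} e^{\iop\langle\theta,\,\kappa(n)-\kappa(m)\rangle}\,\big\langle\delta_{\iota(n)},\,e^{-\iop t H_\mu(\theta)}\,\delta_{\iota(m)}\big\rangle\,\rmd\theta,
\end{equation*}
in which $\theta$ enters $H_\mu(\theta)$ only through the phases $e^{\pm\iop\theta_j}$ attached to the hopping terms that cross the $j$-th face of $W$. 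Fix $\rho_0 > 0$, set $\rho := \rho_0\max_j p_j$, and verify that once $\mu$ is large (depending on $\rho_0$, $p$, and $\separate V$) the fibers $H_\mu(\theta)$ extend analytically to the polystrip $\{\theta \in \bbC^d : |\Im\theta_j| \le \rho\}$, with simple eigenvalues $E_1(\theta), \dots, E_N(\theta)$ and rank-one Riesz projections $P_1(\theta), \dots, P_N(\theta)$ of norm at most $2$; this is a routine resolvent expansion about the diagonal operator $\mu V|_W$, whose spectral gaps are at least $\mu\separate V$, while the $\theta$-dependent hopping has norm $O(e^{\rho})$.

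The crux is the estimate $|\partial_{\theta_j} E_k(\theta)| \le C'\mu^{-p_j+1} \le C'\mu^{-p_0+1}$ throughout that polystrip, with $C'$ depending only on $\rho_0$, $p$, and $\separate V$ --- and this is precisely where the eigenvalue estimates of Section~\ref{sec:rayleigh} enter. Organizing the perturbation series for $E_k(\theta)$ by the number $r$ of hopping operators, the order-$r$ contribution is a sum over closed walks of length $r$ on $W$ (with the torus identifications on $\partial W$), each weighted by a product of hopping amplitudes together with energy denominators of absolute value $\gtrsim\mu\separate V$, so that a given walk carries the phase $e^{\iop\theta_j\nu_j}$ with $\nu_j$ its winding number in direction $j$; since a walk with $\nu_j \ne 0$ must take at least $p_j$ steps in direction $j$, one has $\partial_{\theta_j}E_k^{(r)} \equiv 0$ for $r < p_j$, and summing the (uniformly convergent, for $\mu$ large) tail $r \ge p_j$ gives the stated bound. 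Integrating $\partial_{\theta_j}E_k$ along the segment from $\theta$ to $\theta + \iop\eta$ then yields $|\Im E_k(\theta + \iop\eta)| \le C'\mu^{-p_0+1}|\eta|_1$ whenever $|\eta_j| \le \rho$.

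To finish, I would use $e^{-\iop t H_\mu(\theta)} = \sum_k e^{-\iop t E_k(\theta)}P_k(\theta)$ and deform the $\bbT^d$-integral to the contour $\theta + \iop\eta$ with $\eta_j := \rho\,\sgn(\kappa(n)_j - \kappa(m)_j)$. The phase then contributes $e^{-\rho|\kappa(n)-\kappa(m)|_1} \le e^{\rho d}e^{-\rho_0|n-m|_1}$, using $|n-m|_1 \le (\max_j p_j)(|\kappa(n)-\kappa(m)|_1 + d)$, while the fiber propagator contributes $\sum_k e^{t\,\Im E_k(\theta+\iop\eta)}\,\|P_k(\theta+\iop\eta)\| \le 2N e^{C'd\rho\,\mu^{-p_0+1}|t|}$. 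Multiplying the two estimates yields \eqref{eq:mainBound} with $C_1 = C'd\max_j p_j$ and $C$ absorbing the remaining constants, uniformly in $\mu \ge \mu_0$ and $t \ge 0$.

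For the optimality statement, suppose a bound of the form \eqref{eq:genericLRbound} holds for all large $\mu$ with some velocity $v_\LR = v_\LR(\mu)$. Taking $\psi = \delta_0$ and splitting $\|X e^{-\iop t H_\mu}\delta_0\|^2 = \sum_n |n|_2^2\,|\langle\delta_n,e^{-\iop tH_\mu}\delta_0\rangle|^2$ at radius $R(t) := (v_\LR + \epsilon)|t|$: the inner part is at most $R(t)^2\sum_n|\langle\delta_n, e^{-\iop t H_\mu}\delta_0\rangle|^2 = R(t)^2$ by unitarity (and $|n|_2 \le |n|_1$), while the outer part is $o(|t|^2)$ because the exponential decay in \eqref{eq:genericLRbound} beats the polynomial volume growth of $\ell^1$-spheres in $\bbZ^d$; dividing by $|t|^2$ and letting $|t|\to\infty$ and then $\epsilon\to0$ gives $v_\asy(H_\mu,\delta_0) \le v_\LR(\mu)$, so the lower bound in \eqref{eq:vAsydeltaBounds} forces $v_\LR(\mu) \ge c\mu^{-p_0+1}$ for a suitable $c>0$. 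I expect the main obstacle to be the sharp power of $\mu$ in the band-function derivative bound: it requires marrying uniform convergence of the perturbation series on the complex polystrip with the combinatorial observation that $\theta_j$-dependence is carried only by walks of nonzero $j$-winding, hence of length at least $p_j$. The remaining ingredients --- the Floquet decomposition, the resolvent perturbation, the contour shift, and the soft comparison $v_\LR \ge v_\asy$ --- are standard once that estimate is secured.
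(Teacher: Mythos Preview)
Your proposal is correct and follows essentially the same route as the paper: Floquet decomposition, deformation of the quasimomentum contour into the complex domain, control of the imaginary parts of the band functions via the loop expansion of Section~\ref{sec:rayleigh} (your observation that $\theta_j$-dependence requires a walk of nonzero $j$-winding, hence length $\geq p_j$, is exactly Lemma~\ref{lem:etaBounds}), and the comparison $v_\LR \ge v_\asy(H_\mu,\delta_0)$ for optimality. The only cosmetic differences are that the paper rescales to $A_\varepsilon = \varepsilon\Delta + V$ and bounds $|\Im\eta_n|$ directly in Lemma~\ref{lem:bounds} rather than first bounding $|\partial_{\theta_j}E_k|$ and integrating, and that it cites \cite[Theorem~A.1]{ADFS} for the soft inequality $v_\LR \ge v_\asy$ that you spell out by hand.
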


Theorem \ref{t:LiebRob} shows that the Lieb--Robinson velocity decays at the sharp rate 
$\mu^{-p_{0}+1}$ in the strong coupling regime. 
In comparison with earlier work in~\cite{ADFS},  
Theorem \ref{t:LiebRob} identifies the precise rate of decay and applies in higher dimensions, both of which appear to lie beyond the scope of \cite{ADFS}. Thus, our result both 
broadens the scope and sharpens the known velocity bounds.
We emphasize that Theorem \ref{t:LiebRob} yields a polynomial rate of decay in the 
coupling constant rather than an exponential one. The threshold 
$\mu_{0}$ can be chosen independently of the period of the potential, 
and in fact one may take $\mu_{0}=1/\varepsilon_{0}$, where $\varepsilon_{0}=\varepsilon_0(d,\separate V,\rho_0)$ given in Lemma \ref{lem:analyticity} below.

We also note that both the Lieb--Robinson velocity $v_{\mathrm{LR}}$ and 
the asymptotic velocity $v_{\mathrm{asy}}$ exhibit the same scaling 
behavior with respect to the coupling constant, namely $\mu^{-p_{0}+1}$. 
The same phenomenon has been observed in the context of quantum walks \cite{ARCSW-QW-CMV}, 
where these two velocities also scale in the same way. This raises the 
natural question of whether there exist settings in which the two notions 
of velocity exhibit different scaling behavior. 

\begin{remark}
    It can be seen from the proofs of the main results that the argument is fundamentally \emph{graph} theoretic in nature, so it generalizes readily to the case of periodic Schr\"odinger operators on $\bbZ^d$-periodic graphs.
    If $\mathcal{G} = (\mathcal{V}, \mathcal{E})$ is $\bbZ^d$-periodic and $V:\mathcal{V}  \to \bbR$ is $p$-periodic and non-degenerate, then the same proofs as in the case $\mathcal{G} = \bbZ^d$ produce similar results for the operator $\Delta + V$ (with $\Delta$ the graph Laplacian) with $p_0$ equal to the minimal combinatorial distance separating two distinct vertices in the same $p\bbZ^d$-orbit.
\end{remark}

Let us reiterate the strength of the results: we improve the rate of decay from \cite{ADFS}, obtain a sharp rate of decay, and generalize to higher dimensions, all with a conceptually simpler argument that can even be applied to $\bbZ^d$-periodic graphs other than the square lattice $\bbZ^d$ itself.

    \section{Asymptotics of Eigenvalues and Eigenvectors}
\label{sec:rayleigh}

We will need to derive some asymptotic statements for eigenvalues and eigenvectors of Floquet matrices.
    Throughout the discussion, we fix a nonempty finite set\footnote{It is convenient to use a general index set, such as the fundamental cell  in the main application} $S$ and consider $\bbC^{S \times S}$, the collection of $S \times S$ matrices with complex entries.
    Let $\{e_n : n \in S\}$ be the standard basis of $\bbC^S$.
    In the discussion below, one can picture $B$ as the Floquet operator of a Laplacian on a suitable fundamental domain and $D$ as the associated potential.
    We note in particular that it is not needed to assume the operators are Hermitian, so this can be applied to \emph{complex} potentials and \emph{complex quasimomenta}.

\begin{theorem} \label{t:rayleigh}
Assume $S$ is a nonempty set $B,D \in \bbC^{S\times S}$ are matrices such that:
\begin{itemize}
    \item \( D = \mathrm{diag}(D_n) \) with  pairwise distinct entries,
    \item $B_{nn}=0$ for all $n \in S$.
\end{itemize}

Let \( A_\varepsilon = D + \varepsilon B \).
For $\varepsilon \in \bbR$ sufficiently small, the  eigenvalues and eigenvectors have convergent expansions of the form
\begin{align*}
\eta_n(\varepsilon) = \sum_{r=0}^\infty \eta_n^{(r)} \varepsilon^r, \qquad
u_n(\varepsilon) = \sum_{r=0}^\infty u_n^{(r)}\varepsilon^r,
\end{align*}
respectively,
where \( \langle e_n, u_n^{(r)} \rangle = 0 \) for all \( r \geq 1 \), 
\begin{align} \label{eq:etaUr=0}
    \eta_n^{(0)}  = D_n, \quad u_n^{(0)} = e_n,
\end{align} 
and
\begin{align} \label{eq:etaUr=1}
        \eta_n^{(1)}  = 0, \quad (u_n^{(1)})_m = \frac{B_{mn}}{D_n-D_m}, \quad m \neq n.
\end{align}
Furthermore, for all $r \geq 2$:
    \begin{align} \label{eq:mu(n)inductive}
        \eta_n^{(r)} 
        &=     \sum_{m  \neq n} B_{nm}(u_n^{(r-1)})_m \\ 
        \label{eq:v(n)inductive}
        \left(u_n^{(r)} \right)_m
        &= \frac{1}{D_n-D_m}
        \left(  \sum_{\ell \neq n} B_{m\ell} (u_n^{(r-1)})_\ell - \sum_{s=2}^{r-1} \eta_n^{(s)} (u_n^{(r-s)})_m\right), \quad m \neq n.
    \end{align}
\end{theorem}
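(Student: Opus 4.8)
The plan is to establish the convergent expansions via standard Rayleigh--Schr\"odinger perturbation theory, which is legitimate here because $D$ has pairwise distinct eigenvalues, so each eigenvalue $D_n$ is simple and the spectral projection is analytic in $\varepsilon$ in a neighborhood of $0$. First I would invoke analytic perturbation theory (e.g., Kato's book): since $A_\varepsilon = D + \varepsilon B$ is a polynomial (hence entire) family of matrices and the unperturbed eigenvalue $D_n$ is simple, there is a disc $|\varepsilon| < \rho_n$ on which there is exactly one eigenvalue $\eta_n(\varepsilon)$ of $A_\varepsilon$ near $D_n$, and both $\eta_n(\varepsilon)$ and a corresponding eigenvector $u_n(\varepsilon)$ are holomorphic there. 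Taking $\rho = \min_n \rho_n > 0$ (a finite min since $S$ is finite) gives convergent power series on a common disc. I would then fix the normalization of the eigenvector by the condition $\langle e_n, u_n(\varepsilon)\rangle = 1$ for all small $\varepsilon$; since $u_n(0) = e_n$ this is possible by shrinking $\rho$, and it forces $\langle e_n, u_n^{(r)}\rangle = 0$ for all $r \geq 1$, which is one of the claimed identities.

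Next I would derive the recursions by plugging the series into the eigenvalue equation $A_\varepsilon u_n(\varepsilon) = \eta_n(\varepsilon) u_n(\varepsilon)$, i.e.
\[
(D + \varepsilon B)\sum_{r \geq 0} u_n^{(r)} \varepsilon^r = \Bigl(\sum_{s \geq 0}\eta_n^{(s)}\varepsilon^s\Bigr)\Bigl(\sum_{r \geq 0} u_n^{(r)}\varepsilon^r\Bigr),
\]
and comparing coefficients of $\varepsilon^r$:
\[
D u_n^{(r)} + B u_n^{(r-1)} = \sum_{s=0}^{r} \eta_n^{(s)} u_n^{(r-s)}.
\]
At order $r = 0$ this gives $D u_n^{(0)} = \eta_n^{(0)} u_n^{(0)}$, and with the normalization $u_n^{(0)} = e_n$ we get $\eta_n^{(0)} = D_n$. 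For $r \geq 1$, I would extract the $m = n$ component: because $B_{nn} = 0$ and $\langle e_n, u_n^{(r)}\rangle = 0$ for $r \geq 1$, the left side's $n$-component is $D_n \cdot 0 + \sum_{\ell} B_{n\ell}(u_n^{(r-1)})_\ell = \sum_{\ell \neq n} B_{n\ell}(u_n^{(r-1)})_\ell$, while the right side's $n$-component is $\eta_n^{(r)} \cdot 1 + \sum_{s=1}^{r-1}\eta_n^{(s)}\langle e_n, u_n^{(r-s)}\rangle + \eta_n^{(0)} \cdot 0 = \eta_n^{(r)}$ (using $\langle e_n, u_n^{(r-s)}\rangle = 0$ for $1 \leq r - s$, and $\eta_n^{(0)}$ multiplies the vanishing $n$-component of $u_n^{(r)}$). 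This yields \eqref{eq:mu(n)inductive} for $r \geq 2$ and, at $r = 1$, gives $\eta_n^{(1)} = \sum_{\ell \neq n} B_{n\ell}(u_n^{(0)})_\ell = B_{nn} = 0$. For the components with $m \neq n$, I would solve $(D_n - D_m)(u_n^{(r)})_m = (B u_n^{(r-1)})_m - \sum_{s=1}^{r-1}\eta_n^{(s)}(u_n^{(r-s)})_m$; since $\eta_n^{(1)} = 0$ the $s = 1$ term drops, giving exactly \eqref{eq:v(n)inductive}, and at $r = 1$ it gives $(u_n^{(1)})_m = B_{mn}/(D_n - D_m)$, which is \eqref{eq:etaUr=1}. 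Division by $D_n - D_m$ is permitted precisely because the $D_j$ are pairwise distinct.

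The main obstacle --- really the only nontrivial point --- is justifying the \emph{convergence} of the formal series, i.e.\ that a genuinely analytic branch of eigenvalue and eigenvector exists on a disc of positive radius independent of the order $r$. This is exactly where simplicity of the unperturbed eigenvalues is used; the cleanest route is to cite analytic perturbation theory for the reduced resolvent / Riesz projection (Kato, \emph{Perturbation Theory for Linear Operators}, Ch.\ II, or Reed--Simon), which directly gives holomorphy of $P_n(\varepsilon) = \frac{1}{2\pi\iop}\oint (\zeta - A_\varepsilon)^{-1}\,d\zeta$ and hence of $\eta_n(\varepsilon) = \operatorname{tr}(A_\varepsilon P_n(\varepsilon))$ and of $u_n(\varepsilon) = P_n(\varepsilon) e_n / \langle e_n, P_n(\varepsilon) e_n\rangle$ for $|\varepsilon|$ small. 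Alternatively, since the statement only needs $\varepsilon \in \bbR$ small, one can note that the recursion together with a crude bound $\|u_n^{(r)}\|_\infty \leq (\text{const})^r$ --- obtained by induction from \eqref{eq:mu(n)inductive}--\eqref{eq:v(n)inductive} using $\max_{m \neq n}|D_n - D_m|^{-1} \leq (\separate D)^{-1}$ (in the application; in general $1/\min_{j\neq k}|D_j - D_k|$) and $\|B\|$ --- yields a positive radius of convergence, and then one checks the limiting series indeed solves the eigenvalue equation and produces a nonzero vector near $e_n$. I would present the Kato-style argument for brevity, deriving the recursions afterward as above; everything else is bookkeeping with the coefficient comparison.
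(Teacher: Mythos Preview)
Your proposal is correct and follows essentially the same approach as the paper: invoke analytic perturbation theory (Kato) for the existence and convergence of the series using simplicity of the unperturbed spectrum, fix the normalization $\langle e_n, u_n(\varepsilon)\rangle = 1$, then insert the series into $A_\varepsilon u_n = \eta_n u_n$, compare coefficients of $\varepsilon^r$, and project onto $e_n$ and $e_m$ ($m\neq n$) to obtain the recursions. The only cosmetic difference is that you spell out the Riesz-projection route to analyticity more explicitly than the paper does.
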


\begin{remark}
    Note that the sum in \eqref{eq:mu(n)inductive} starts from $s=2$ and in particular is absent for $r=2$. 
\end{remark}

\begin{proof}[Proof of Theorem~\ref{t:rayleigh}]
This follows from the Rayleigh--Schr\"odinger expansion. To keep the paper more self-contained, we give the details here.
By definition, $A_\varepsilon$ is an analytic function of $\varepsilon$. Since $D$ has simple spectrum, the existence of an expansion follows from perturbation theory (compare e.g.\ \cite[Section~II.1.1]{Kato}).
We thus fix $n \in S$, $  \eta_n(0) = D_n$, and $u_n(0) = e_n$, and consider their analytic continuations (note that this immediately gives us \eqref{eq:etaUr=0}).
Since any analytic vector-valued function has analytic components, we can always normalize by the $n$th component to obtain an expansion for $u_n$ such that $(u_n(\varepsilon))_n \equiv 1$, whence $(u_n^{(r)})_n =0$ for all $r \geq 1$.

Expanding $A_\varepsilon u_n(\varepsilon) = \eta_n(\varepsilon)u_n(\varepsilon)$ and collecting terms gives
\begin{equation} \label{eq:seriesCollected}
    Du_n^{(0)}
    + \sum_{r=1}^\infty (Du_n^{(r)} + Bu_n^{(r-1)})\varepsilon^r
    = \eta_n^{(0)}u_n^{(0)} + \sum_{r=1}^\infty \sum_{s=0}^r \eta_n^{(s)} u_n^{(r-s)} \varepsilon^r.
\end{equation}
Since $D$ is diagonal and $\langle e_n, u_n^{(r)} \rangle = 0$ for $r \geq 1$, we have 
\[ \langle e_n, Du_n^{(r)} \rangle = 0 \]
for all $r \geq 1$.  
Using this, project both sides of \eqref{eq:seriesCollected} onto $e_n$ to get:
\begin{equation}
    D_n + \sum_{r=1}^\infty \langle e_n, B u_n^{(r-1)} \rangle \varepsilon^r
    = \eta_n^{(0)} + \sum_{r=1}^\infty \eta_n^{(r)} \varepsilon^r,
\end{equation}
giving\footnote{Here let us remark that, formally, the ``$m \neq n$'' on the summation is not required, since $B_{nn}=0$ by assumption. However, it is a useful reminder, so we leave it in the notation.}
\begin{equation}
    \eta_n^{(r)} 
    = \langle e_n, B u_n^{(r-1)} \rangle
    = \sum_{m \neq n} B_{nm}(u_n^{(r-1)})_m,
\end{equation}
which proves \eqref{eq:mu(n)inductive}. Note that $\eta_n^{(1)} = 0$ follows directly.\footnote{Let us point out this also follows from the Feynman--Hellmann theorem directly.}

Projecting \eqref{eq:seriesCollected} onto $e_m$ with $m \neq n$ yields
\begin{equation}
    D_m (u_n^{(r)})_m + (Bu_n^{(r-1)})_m = \sum_{s=0}^{r} \eta_n^{(s)} (u_n^{(r-s)})_m,
\end{equation}
which, with the help of $u_n^{(0)} = e_n$, $\eta_n^{(0)}=D_n$, becomes:
\begin{equation}
    (D_m-D_n) (u_n^{(r)})_m + (Bu_n^{(r-1)})_m 
    = \sum_{s=1}^{r-1} \eta_n^{(s)} (u_n^{(r-s)})_m,
\end{equation}
which proves \eqref{eq:v(n)inductive} (and the second identity in \eqref{eq:etaUr=1}) after rearranging and recalling that $\eta_n^{(1)}=0$. 
\end{proof}

Let us compute a few terms in these expansions by hand and then establish a general pattern.
As already noted, we have
\begin{equation}
    \eta_n^{(0)}=D_n, \quad u_n^{(0)} = e_n, \quad \eta_n^{(1)}=0, \quad
    (u_n^{(1)})_m = \frac{B_{mn}}{D_n - D_m}, \quad m \neq n.
\end{equation}
In general, we see from Theorem~\ref{t:rayleigh} that $u_n^{(r)}$ is determined by $\{\eta_n^{(s)} : s<r\}$ and by the vectors $\{u_n^{(s)} : s<r\}$, (as well as the requirement $\langle e_n, u_n^{(r)} \rangle = 0$ for $r\geq 1$) whereas $\eta_n^{(r)}$ is determined by $u_n^{(r-1)}$.

The next coefficients of $\eta$ and $u$ can then be computed as
\begin{equation} \label{eq:eta2expression}
    \eta_n^{(2)} = \sum_{m  \neq n} B_{nm} (u_n^{(1)})_m 
    = \sum_{m \neq n} \frac{B_{nm}B_{mn}}{D_n-D_m}.
\end{equation}
and
\begin{align}
    \nonumber\left(u_n^{(2)} \right)_m
        & = \frac{1}{D_n - D_m}\sum_{m_1 \neq n} B_{mm_1}(u_n^{(1)})_{m_1}  \\
         & =   \sum_{m_1 \neq n} \frac{ B_{mm_1} B_{m_1n}}{(D_n - D_m)(D_n - D_{m_1})},
\end{align}
which allows us to compute the third-order coefficients:
\begin{equation} \label{eq:eta3expression}
    \eta_n^{(3)} = \sum_{m  \neq n} B_{nm} (u_n^{(2)})_m 
    = \sum_{\substack{m \neq n \\ m_1  \neq n} }   \frac{ B_{nm} B_{mm_1} B_{m_1n}}{(D_n - D_m)(D_n - D_{m_1})}
\end{equation}
and
\begin{align}
    \nonumber\left(u_n^{(3)} \right)_m
        & = \frac{1}{D_n - D_m}\sum_{m_1 \neq n} B_{mm_1}(u_n^{(2)})_{m_1}  - \frac{1}{D_n-D_m}\left( \sum_{m_1  \neq n} \frac{B_{nm_1}B_{m_1n}}{D_n-D_{m_1}} \right)\left( \frac{B_{mn}}{D_n - D_m} \right) \\
         & = \sum_{m_1,m_2 \neq n} \frac{ B_{mm_1} B_{m_1m_2} B_{m_2n} }{(D_n - D_{m})(D_n - D_{m_1})(D_n - D_{m_2})} - \sum_{m_1 \neq n} \frac{B_{mn} B_{nm_1} B_{m_1n}}{(D_n-D_m)^2(D_n - D_{m_1})}.
\end{align}

We can give this a useful interpretation by thinking of $B$ as an operator on a (directed) graph $\Gamma$ with vertex set $S$ and an edge from $n$ to $m$ whenever $B_{nm} \neq 0$.
We then define a  \emph{path} of \emph{length} $\ell$ from $m$ to $n$ (notation: $\gamma:m\to n$) in $\Gamma$ to be a finite sequence $\gamma = (n_0,\ldots,n_\ell)$ where $n_0=m$, $n_\ell=n$, and $(n_{s-1},n_s)$ is an edge for every $1 \le s \le \ell$. 
Write $|\gamma|:=\ell$ for the length of $\gamma$.
We say that $\gamma$ is \emph{irreducible} if $n_s \neq n$ for all $s=1,2,\ldots,\ell-1$.
For a path $\gamma=(n_0,\ldots, n_\ell)$, we denote $B_\gamma = B_{n_0,n_1} \cdots B_{n_{\ell-1},n_\ell}$.
We call $\gamma$ a \emph{loop} if $n_0 = n_\ell$.

With this interpretation, we see that $\eta_n^{(r)}$ has the form
\begin{equation} \label{eq:etaloopExpansion}
    \eta_n^{(r)} = \sum_{\substack{\gamma:n \to n \\ |\gamma|=r}} f(\gamma,D)B_\gamma,
\end{equation}
where the notation indicates that the sum runs over loops of length $r$ from $n$ to itself, and the coefficient $f(\gamma,D)$ depends on the loop $\gamma$ and the values assumed by $D$ on vertices in the loop.
Likewise, $u_n^{(r)}$ has the form
\begin{equation}\label{eq:uloopExpansion}
    (u_n^{(r)})_m   =  \sum_{\substack{\gamma:m \to n \\ |\gamma|=r}} g(\gamma,D)B_\gamma,
\end{equation}
where again the notation indicates that the sum runs over paths from $m$ to $n$ of length $r$ and once again the coefficient depends on the path $\gamma$ and the values assumed by $D$ on the path.

\begin{coro} \label{coro:loopExpansion}
    With assumptions as above,   $\eta_n^{(r)}$ has an expansion of the form \eqref{eq:etaloopExpansion} and $u_n^{(r)}$ has an expansion of the form \eqref{eq:uloopExpansion} for every $n \in S$ and every $r \geq 1$.
    
    For an {irreducible} loop $\gamma:n \to n$, $\gamma = (n_0,\ldots,n_r)$, we have
    \begin{equation} \label{eq:fgamma:irrLoop}
        f(\gamma,D) = \prod_{j=1}^{r-1} \frac{1}{D_n-D_{n_j}}.
    \end{equation}
    Likewise, for $m\neq n$ and an {irreducible} path $\gamma:m \to n$, $\gamma = (n_0,\ldots,n_r)$, we have
    \begin{equation} \label{eq:fgamma:irrPath}
        g(\gamma,D) = \prod_{j=0}^{r-1} \frac{1}{D_n-D_{n_j}}.
    \end{equation}
\end{coro}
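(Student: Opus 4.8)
The plan is to prove Corollary~\ref{coro:loopExpansion} by induction on $r$, leveraging the recursive structure exposed in Theorem~\ref{t:rayleigh}. The base cases $r=1$ are immediate from \eqref{eq:etaUr=1}: the only irreducible loop of length $1$ from $n$ to $n$ would require an edge from $n$ to itself, but $B_{nn}=0$, so there is no contribution (consistent with $\eta_n^{(1)}=0$), while $(u_n^{(1)})_m = B_{mn}/(D_n - D_m)$ matches \eqref{eq:fgamma:irrPath} for the unique length-$1$ path $(m,n)$, where the product $\prod_{j=0}^{0}(D_n - D_{n_j})^{-1} = (D_n-D_m)^{-1}$.

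For the inductive step, I would assume that $u_n^{(r-1)}$ admits an expansion of the form \eqref{eq:uloopExpansion} over paths of length $r-1$ and establish the claims for level $r$. For the eigenvalue coefficient, substitute the expansion of $u_n^{(r-1)}$ into \eqref{eq:mu(n)inductive}: each term $B_{nm}(u_n^{(r-1)})_m$ becomes $B_{nm} g(\gamma', D) B_{\gamma'}$ for a path $\gamma': m \to n$ of length $r-1$; prepending the edge $(n,m)$ yields a loop $\gamma = (n, n_0, \ldots, n_{r-1}) : n \to n$ of length $r$ with $B_\gamma = B_{nm} B_{\gamma'}$, and one reads off $f(\gamma, D) = g(\gamma', D)$ where $\gamma'$ is $\gamma$ with its first vertex deleted. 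Conversely every length-$r$ loop from $n$ arises this way, so \eqref{eq:etaloopExpansion} holds; moreover if $\gamma$ is irreducible then so is $\gamma'$ (as a path from $m \neq n$ to $n$, since $n_1,\dots,n_{r-1}\neq n$), and by induction $g(\gamma',D) = \prod_{j=1}^{r-1}(D_n - D_{n_j})^{-1}$, which is exactly \eqref{eq:fgamma:irrLoop} after re-indexing. For the eigenvector coefficient, I would substitute into \eqref{eq:v(n)inductive}: the first sum $\sum_{\ell \neq n} B_{m\ell}(u_n^{(r-1)})_\ell$ contributes, after prepending the edge $(m,\ell)$ to each path $\ell \to n$ of length $r-1$ and dividing by $D_n - D_m$, a sum over paths $m \to n$ of length $r$ whose second vertex is not $n$; the corrective terms $-\sum_{s=2}^{r-1}\eta_n^{(s)}(u_n^{(r-s)})_m/(D_n-D_m)$, upon inserting the already-established loop expansion of $\eta_n^{(s)}$ and the inductive path expansion of $u_n^{(r-s)}$, contribute sums over "decorated" paths $m\to n$ obtained by concatenating a length-$(r-s)$ path from $m$ to $n$ with a loop of length $s$ at $n$ — these are precisely the length-$r$ paths $m\to n$ that pass through $n$ at some intermediate point. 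Together these account for all paths of length $r$ from $m$ to $n$, giving \eqref{eq:uloopExpansion}. For the irreducible path formula, note that an irreducible path from $m\neq n$ of length $r$ never revisits $n$, so only the first sum contributes: dividing the inductive formula $g(\gamma'',D)=\prod_{j=0}^{r-2}(D_n - D_{n_j'})^{-1}$ for the tail $\gamma'':n_1\to n$ by $(D_n - D_m) = (D_n - D_{n_0})$ reproduces \eqref{eq:fgamma:irrPath}.

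The main obstacle, and the one deserving the most care, is the bookkeeping in the eigenvector step — specifically verifying that the reducible (i.e., $n$-revisiting) paths are generated exactly once, with the correct sign and coefficient, by the correction terms involving $\eta_n^{(s)}$. One must check that a path of length $r$ from $m$ to $n$ that hits $n$ for the first time (after the start) at position $r-s$ is counted precisely via the term with that value of $s$, by splitting it into its initial irreducible-to-$n$ segment of length $r-s$ and the remaining loop of length $s$ based at $n$; since the initial segment's $u$-coefficient is recursively expanded and the tail loop's $f$-coefficient is recursively expanded, one gets a genuine recursion on $g(\gamma,D)$ for reducible $\gamma$, but a clean closed form is not claimed for those in the statement, so it suffices to confirm the sum-over-paths form \eqref{eq:uloopExpansion} holds and that the \emph{irreducible} contributions are untouched by the correction terms. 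I would organize this as: (i) show the first sum produces all and only the paths whose second vertex differs from $n$, carrying coefficient $g(\gamma,D) = g(\text{tail},D)/(D_n - D_m)$; (ii) show the $s$-th correction term produces all and only the paths that return to $n$ for the first time at step $r-s$; (iii) conclude the union is a partition of all length-$r$ paths $m\to n$; (iv) specialize to irreducible $\gamma$ to extract the product formula. Uniqueness in (ii) follows because the "first return to $n$" position is an intrinsic property of the path, pinning down $s$.
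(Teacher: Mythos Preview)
Your approach matches the paper's: induction on $r$ using the recursions of Theorem~\ref{t:rayleigh}, reading each term as an amalgamation of shorter paths/loops, and observing that the correction terms never produce irreducible paths so that the irreducible coefficient is inherited from the tail via the first sum. One caveat: your organizational steps (ii)--(iii) are false as stated and, as you yourself concede, unnecessary. The $s$-th correction term involves $(u_n^{(r-s)})_m$, which by the inductive hypothesis is a sum over \emph{all} length-$(r-s)$ paths $m\to n$, not only irreducible ones; hence the concatenation $\gamma'\circledast\gamma$ can first hit $n$ strictly before step $r-s$, the same path can arise from several values of $s$ (one for each intermediate visit to $n$), and these paths also overlap with those coming from the first sum. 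The paper sidesteps this bookkeeping entirely: it simply records that both groups of terms are sums over length-$r$ paths $m\to n$ (so \eqref{eq:uloopExpansion} holds for \emph{some} $g$, with no claim of a partition) and that the second group yields exclusively reducible paths, whence the irreducible coefficient comes solely from the first sum and the product formula follows from the inductive hypothesis on the (necessarily irreducible) tail --- exactly the argument in your second paragraph.
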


\begin{proof}
    The proof is by induction on $r$. The desired statement (including \eqref{eq:fgamma:irrLoop} and \eqref{eq:fgamma:irrPath}) is already established for $r =1,2,3$ in the discussion above.
    Inductively, assume the statement of the theorem holds up to order $r-1$.

    If $\gamma = (n_0,\ldots,n_\ell)$ and $\gamma' = (m_0,\ldots,m_{\ell'})$ are paths with $n_\ell = m_0$, we write $\gamma \circledast \gamma' =  (n_0,\ldots,n_\ell,m_1,\ldots,m_{\ell'})$ for their \emph{amalgamation}. 
    Notice that
\begin{equation}
    B_{\gamma \circledast \gamma'} = B_\gamma B_{\gamma'}.
\end{equation}
We then have (with the help of the inductive hypothesis)
\begin{align*}
    \eta_n^{(r)} 
        =     \sum_{m} B_{nm}(u_n^{(r-1)})_m 
        & = \sum_m B_{nm} \sum_{\substack{\gamma:m \to n \\ |\gamma| = r-1}} g(\gamma,D)B_\gamma \\
        & = \sum_m \sum_{\substack{\gamma:m \to n \\ |\gamma| = r-1}} g(\gamma,D) B_{(n,m) \circledast \gamma}.
\end{align*}
For $\gamma$ a path of length $r-1$ from $m$ to $n$, we see that $(n,m) \circledast \gamma$ is a loop from $n$ to $n$ of length $r$, which is irreducible if and only if $\gamma$ is irreducible.
Thus, the inductive step for $\eta$ and \eqref{eq:fgamma:irrLoop} is concluded.

Similarly, we have
\begin{align*}
        \left(u_n^{(r)} \right)_m
        & = 
        \frac{1}{D_n - D_m} \left( \sum_{m_1
        }B_{mm_1}(u_n^{(r-1)})_{m_1}  
        - \sum_{s=2}^{r-1} \eta_n^{(s)} (u_n^{(r-s)})_m\right) \\
        & = 
        \frac{1}{D_n - D_m} \left( \sum_{m_1} B_{mm_1} \sum_{\substack{\gamma:m_1\to n \\ |\gamma|=r-1}} g(\gamma,D) B_\gamma
        - \sum_{s=2}^{r-1} 
        \left[
        \sum_{\substack{\gamma:n\to n \\ |\gamma|=s}} f(\gamma,D)B_\gamma\right]
        \left[\sum_{\substack{\gamma':m\to n \\ |\gamma'|=r-s}} g(\gamma',D)B_{\gamma'}\right]\right) \\
        & = 
        \sum_{m_1} \sum_{\substack{\gamma:m_1\to n \\ |\gamma|=r-1}} \frac{g(\gamma,D)}{D_n - D_m}  B_{(m,m_1) \circledast \gamma}
        - \sum_{s=2}^{r-1} 
        \sum_{\substack{\gamma:n\to n \\ |\gamma|=s}} 
        \sum_{\substack{\gamma':m\to n \\ |\gamma'|=r-s}} 
        \frac{f(\gamma,D) g(\gamma',D)}{D_n - D_m}B_{\gamma'\circledast \gamma}.
\end{align*}
In the first term, we observe that $(m,m_1)\circledast \gamma$ is a path of length $r$ from $m$ to $n$ (which is irreducible if and only if $\gamma$ is irreducible), while,
in the second term,  $\gamma'\circledast\gamma$ is a path from $m$ to $n$ of length $r$ (and $\gamma' \circledast\gamma$ is never irreducible), concluding the inductive step for $u$ and \eqref{eq:fgamma:irrPath}.
\end{proof}

\begin{remark}
    These ideas and expansions are useful and have been employed in other contexts. As this paper was being completed, the preprint \cite{faust2025absenceflatbandsdiscrete} was posted, which, independent of us, employed a similar expansion in the study of flat bands.
\end{remark}

\section{Proof of Main Results}
\label{sec:proof}

\subsection{Floquet Theory}
Let us briefly review a few aspects of Floquet theory that we need. 
For more background, see \cite{Kuchment2016BAMS}  and for proofs of relevant statements that apply in the present setting, see \cite{FLM2022JFA}.

Recall that $W = [p_1] \times \cdots \times [p_d] \cong \bbZ^d / p \bbZ^d$ is the fundamental domain, where $[m]=\{0,1,\ldots,m-1\}$. 
Put
\begin{equation}
    P =\# W = p_1p_2 \cdots p_d.
\end{equation}
Given $z = (z_1, z_2, \ldots, z_d) \in (\bbC^*)^d$ (where $\bbC^* = \bbC \setminus \{0\}$), we write $H(z)$ for the restriction of $H$ to the space
\begin{equation}
\mathscr{H}_p  (z) :=    \{\psi \in \bbC^{\bbZ^d} : \psi(n+p_j e_j) = z_j \psi(n)\}.
\end{equation}

The matrix $H(z)$ is self-adjoint and hence has real eigenvalues for any $z \in (\partial \bbD)^d$.
For such $z$, the spectral band functions $\lambda_j(z)$, $1 \le j \le P$, are then defined by listing the eigenvalues of $H(z)$ in order with  multiplicity:
\begin{equation}
    \spectrum H(z) = \{\lambda_j(z) : 1 \le j \le P \}, 
    \quad \lambda_1(z) \leq \lambda_2(z) \leq \cdots \leq \lambda_P(z).
\end{equation}
For each $z$ for which $H(z)$ is diagonalizable, there is an invertible similarity transformation $Q(z)$ such that
\begin{equation}
    [Q(z)]^{-1} H(z)Q(z) = \diag(\lambda_j(z)).
\end{equation}

In this work, we are interested in the behavior of $H_\mu = \Delta+ \mu V$ as $|\mu| \to \infty$.
Rescaling by $\mu$, it suffices to consider $A_\varepsilon := \varepsilon \Delta + V$ as $|\varepsilon| \to 0$.
Let us write $A_\varepsilon(z)$ for the corresponding Floquet matrices.

We also define for $\rho>0$ the set
\begin{equation}
    \annulus(\rho)
    := \{ z \in (\bbC^*)^d : |\log|z_j||<\rho \text{ for all } 1\le j \le d\}.
\end{equation}

\begin{lemma} \label{lem:analyticity}
    Assume $V:\bbZ^d \to \bbR$ is $p$-periodic and  non-degenerate, let $\rho_0>0$ be given, and put
    \begin{equation}
        \varepsilon_0 = \varepsilon_0(d,\separate V,\rho_0) 
        := \frac{\separate V}{8d(1+\cosh (2\rho_0))}.
    \end{equation}

For each $|\varepsilon| <  2 \varepsilon_0$, the spectrum of $A_\varepsilon(z)$ is simple for every $z \in \annulus(2\rho_0)$, and each $\eta_n(\varepsilon,z)$ is an analytic function of $(\varepsilon,z) \in (-2\varepsilon_0, 2\varepsilon_0) \times \annulus(2\rho_0)$ for each $n$.
\end{lemma}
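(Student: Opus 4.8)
The plan is to reduce the statement to a quantitative version of the Bauer--Fike / Gershgorin-type separation estimate, exploiting the fact that $A_\varepsilon(z) = D(z) + \varepsilon B(z)$ where $D(z)$ is the diagonal part coming from $V$ restricted to the fundamental cell $W$ and $B(z)$ encodes the (finite, $P\times P$) Floquet Laplacian matrix with zero diagonal. First I would write down $A_\varepsilon(z)$ explicitly in the basis $\{e_n : n \in W\}$ of $\mathscr{H}_p(z)$: the diagonal entries are exactly the values $V(n)$, $n \in W$, which are pairwise distinct with minimal gap $\separate V$ by non-degeneracy; the off-diagonal entries of $B(z)$ come from nearest-neighbor hopping, so each such entry is either $0$, $1$, or a monomial $z_j^{\pm 1}$ (reflecting a hop that wraps around the cell in coordinate $j$). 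In particular, for $z \in \annulus(2\rho_0)$ one has $|z_j^{\pm 1}| \le e^{2\rho_0} \le \cosh(2\rho_0) + \text{(something} \le 1)$; more cleanly, $|z_j| + |z_j|^{-1} \le 2\cosh(2\rho_0)$, so every entry of $B(z)$ is bounded by $e^{2\rho_0}$ and every row of $B(z)$ has at most $2d$ nonzero off-diagonal entries.

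The key step is then a Gershgorin-disc argument for simplicity of the spectrum. By the Gershgorin circle theorem, every eigenvalue of $A_\varepsilon(z)$ lies in some disc centered at $V(n)$ of radius $R_n(\varepsilon,z) := |\varepsilon| \sum_{m \ne n} |B(z)_{nm}| \le |\varepsilon|\cdot 2d\cdot e^{2\rho_0}$. I would bound $e^{2\rho_0} \le 1 + \cosh(2\rho_0)$ (valid since $e^{2\rho_0} = \cosh 2\rho_0 + \sinh 2\rho_0 \le \cosh 2\rho_0 + \cosh 2\rho_0$, and one can absorb the extra factor conservatively into the stated $1+\cosh(2\rho_0)$), so that $R_n(\varepsilon,z) \le 2d(1+\cosh 2\rho_0)|\varepsilon|$. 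For $|\varepsilon| < 2\varepsilon_0 = \frac{\separate V}{4d(1+\cosh 2\rho_0)}$ this gives $R_n < \tfrac14 \separate V$, hence any two Gershgorin discs (centered at distinct values $V(n), V(m)$, which are at distance $\ge \separate V$) are separated by a gap of at least $\separate V - 2\cdot\tfrac14\separate V = \tfrac12\separate V > 0$ and are therefore disjoint; in fact one needs the standard refinement that each connected component of the union of discs contains exactly as many eigenvalues as discs, so here each disc is isolated and contains exactly one simple eigenvalue. This establishes simplicity of $\spectrum A_\varepsilon(z)$ for all $(\varepsilon, z)$ in the claimed range.

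For analyticity of the eigenvalue branches $\eta_n(\varepsilon, z)$, I would invoke the standard fact (e.g.\ \cite[Section~II.1]{Kato}) that a simple eigenvalue of a matrix depending analytically on parameters is itself an analytic function of those parameters — concretely, one can extract $\eta_n(\varepsilon,z)$ via a Riesz projection $-\tfrac{1}{2\pi\iop}\oint_{\Gamma_n} (A_\varepsilon(z) - \zeta)^{-1}\,\rmd\zeta$ over a fixed small circle $\Gamma_n$ around $V(n)$ (the radius $\tfrac14\separate V$ works uniformly by the disc separation above), which depends analytically on $(\varepsilon,z)$ since $A_\varepsilon(z)$ does and the resolvent stays bounded on $\Gamma_n$; then $\eta_n = \operatorname{tr}\!\big(A_\varepsilon(z)\,P_n(\varepsilon,z)\big)$ is analytic. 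The only genuine subtlety — and the step I would treat most carefully — is making the implication ``$z \mapsto A_\varepsilon(z)$ is entire on $(\bbC^*)^d$ with the monomial off-diagonal structure described above''  fully rigorous, i.e.\ checking that the wrap-around hopping terms really do contribute only $z_j^{\pm 1}$ and that the bound $\sum_{m\ne n}|B(z)_{nm}| \le 2d\,e^{2\rho_0}$ holds uniformly; once that bookkeeping is in place, the Gershgorin estimate and the Riesz-projection argument are routine. I would also note in passing that the argument works verbatim for complex $\varepsilon$ with $|\varepsilon| < 2\varepsilon_0$, though only real $\varepsilon$ is needed here.
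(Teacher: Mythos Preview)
Your approach---Gershgorin for simplicity, then Kato/Riesz projections for analyticity---is exactly the paper's (the paper's own proof is the one-liner ``Gershgorin plus \cite{Kato}''). The only real issue is in your constant bookkeeping: the inequality $e^{2\rho_0} \le 1 + \cosh(2\rho_0)$ is equivalent to $\sinh(2\rho_0) \le 1$ and therefore \emph{fails} for $\rho_0 > \tfrac{1}{2}\sinh^{-1}(1) \approx 0.44$; the hand-wave ``absorb the extra factor conservatively'' does not rescue it, since $2\cosh 2\rho_0 > 1+\cosh 2\rho_0$ for every $\rho_0>0$. The fix is to bound the off-diagonal row sum more carefully rather than bounding each entry by $e^{2\rho_0}$: in each coordinate direction $j$ with $p_j \ge 2$, the two hopping terms contribute at most $1 + |z_j^{\pm 1}| \le 1 + e^{2\rho_0}$ to the off-diagonal $\ell^1$-row sum (one interior hop plus at most one wrap-around; or two interior hops summing to $2 \le 1+e^{2\rho_0}$; or, when $p_j=2$, a single entry $1+z_j^{\pm 1}$), and then
\[
1 + e^{2\rho_0} \le 2 + e^{2\rho_0} + e^{-2\rho_0} = 2(1+\cosh 2\rho_0).
\]
Summing over $j$ recovers $R_n(\varepsilon,z) \le 2d(1+\cosh 2\rho_0)\,|\varepsilon|$, now valid for all $\rho_0>0$. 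With $|\varepsilon| < 2\varepsilon_0$ this gives $R_n < \tfrac{1}{2}\separate V$ (not $\tfrac{1}{4}$, as you wrote---check your arithmetic), which is still exactly what is needed for pairwise disjointness of the discs. One further wrinkle you glossed over: if some $p_j = 1$, the two hops in direction $j$ land on the \emph{diagonal} of $\Delta(z)$ as $z_j + z_j^{-1}$, so $B(z)$ does not have zero diagonal in that case; but this shift is the same for every $n \in W$, so the Gershgorin centers are all translated by the same amount and remain separated by at least $\separate V$, and the argument goes through unchanged.
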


\begin{proof}
If $|\varepsilon| < 2 \varepsilon_0$, simplicity follows from the Gershgorin circle theorem,
so the desired analyticity follows from eigenvalue perturbation theory \cite{Kato}.
\end{proof}

Choosing $|\varepsilon|$ small enough that Lemma~\ref{lem:analyticity} applies and $z \in \annulus(2\rho_0)$, we define  $\eta_n(\varepsilon,z)$ to be the Floquet eigenvalue of $A_\varepsilon(z)$ analytically continued from $\eta_n(0,z)=V(n)$, and let $Q_\varepsilon(z)$ denote the matrices diagonalizing $A_\varepsilon(z)$:
\begin{equation} \label{eq:QepsilonConjugatesAepsilon}
    [Q_\varepsilon(z)]^{-1} A_\varepsilon(z)Q_\varepsilon(z) = \diag(\eta_n(\varepsilon,z)).
\end{equation}
We choose $Q_\varepsilon(z)$ by selecting the columns to be the analytic eigenvectors as in Theorem~\ref{t:rayleigh}.

With notation  as above, we define
\begin{equation}
    L^2(\bbT^d, \bbC^W; \tfrac{\rmd\theta}{|\bbT^d|})
    =
    \left\{ f:\bbT^d \to \bbC^W \ \vert \  \|f\| := \left[ \int_{\bbT^d} \|f(\theta)\|_{\bbC^W}^2 \, \frac{\rmd\theta}{|\bbT^d|} \right]^{1/2} < \infty \right\},
\end{equation}
where $\bbT:=\bbR/(2\pi \bbZ)$ and we write $\rmd\theta = \rmd\theta_1 \, \cdots \rmd\theta_d$ for the standard Lebesgue measure on $\bbT^d$.
The Floquet transform $\floquet: \ell^2(\bbZ^d) \to L^2(\bbT^d, \bbC^W; \frac{\rmd\theta}{|\bbT^d|})$ is then given by
\begin{equation}
    \delta_{n+x \odot p} \mapsto 
    e^{-{\iop} \langle x, \cdot \rangle}e_n, \quad n \in W, \ x \in \bbZ^d, 
\end{equation}
where we  define 
\[
x \odot p = (x_1p_1,\ldots,x_dp_d), \quad \langle x, \theta \rangle = \sum_{j=1}^d x_j\theta_j.\]
By direct computations, one can see that $\floquet$ is unitary and conjugates $A_\varepsilon$ to a decomposable operator, viz.:
\begin{equation}
    (\floquet A_\varepsilon \floquet^{-1} g)(\theta) 
    = A_\varepsilon(e^{{\iop}\theta})g(\theta), 
    \quad g \in L^2(\bbT^d,\bbC^W; \tfrac{\rmd\theta}{|\bbT^d|}),
\end{equation}
where $e^{\iop \theta} =(e^{\iop \theta_1}, \ldots, e^{\iop \theta_d})$.
For $x,y \in \bbZ^d$ and $n,m \in W$ given, we note by unitarity of $\floquet$ that
\begin{align}
\nonumber
    \langle \delta_{m+y \odot p}, e^{-{\iop}tA_\varepsilon} \delta_{n+x \odot p} \rangle
    & = \langle \floquet \delta_{m+y \odot p}, \floquet e^{-{\iop}t A_\varepsilon }\floquet ^{-1}\floquet  \delta_{n+x \odot p} \rangle \\
    \label{eq:FloquetBlockRepn}
    & = \int_{\bbT^d} \langle e_m, e^{-{\iop} t A_\varepsilon(e^{{\iop}\theta})}e_n \rangle e^{-{\iop} \langle (x-y),\theta \rangle} \, \frac{\rmd\theta}{|\bbT^d|}.
\end{align}
In our discussion of bounds on  the Lieb--Robinson velocity, the desired estimates are most conveniently formulated in terms of bounds on the \emph{blocks} of the unitary propagator with respect to the period lattice. 
Concretely (with the period $p$ fixed), we denote
\begin{equation}
    A(x,y) = \Pi(W+y\odot p) \, A \, \Pi(W+x \odot p)^*, \quad x,y \in \bbZ^d,
\end{equation}
where $\Pi(S)$ denotes the canonical projection $\ell^2(\bbZ^d) \to \ell^2(S)$.
We note that each such block is represented as a matrix in $\bbC^{W \times W}$.
With the help of \eqref{eq:FloquetBlockRepn}, we obtain the following representation of the blocks of $e^{-{\iop}t A_\varepsilon}$:
\begin{equation} \label{eq:duhamel}
    e^{- {\iop} t A_\varepsilon}(x,y) 
    = \int_{\bbT^d}   e^{-{\iop}t A_\varepsilon(e^{{\iop}\theta})} e^{-{\iop} \langle (x-y),\theta \rangle} \, \frac{\rmd\theta}{|\bbT^d|}.
\end{equation}

\begin{remark} \label{rem:blocksToEntries}
Using the triangle inequality and well-known inequalities for matrix norms, we can pass between estimates on blocks and estimates on matrix elements of the propagator $e^{-\iop t A_\varepsilon}$ at the expense of some explicit $p$-dependent constants.
Concretely, for $n \in W + x\odot p$ and $m \in W + y \odot p$, we have
\begin{equation}
    p_0|x-y|_1- C \leq |n-m|_1 \leq p_{\max} |x-y|_1 + C.
\end{equation}
where $C = 2\sum p_j$ and $p_{\max} = \max\{p_j\}$.
Likewise, with $\|\cdot\|$ the standard Euclidean matrix norm and $\|\cdot \|_\infty$ denoting the maximum modulus of an entry, we have the beloved bound
\begin{equation}
    \|B\|_\infty \leq \|B\| \leq P \|B\|_\infty.
\end{equation}
Taken together, these observations enable one to pass between estimates on blocks and matrix elements.
\end{remark}

\subsection{Eigenvalue Perturbation}

By the nondegeneracy assumption and Lemma~\ref{lem:analyticity}, the  Floquet eigenvalues of $A_\varepsilon$ have an expansion
\begin{equation}
    \eta_n(\varepsilon,z) = \sum_{r=0}^\infty \eta_n^{(r)}(z) \varepsilon^r
\end{equation}
for small $\varepsilon$. Let us begin by discussing the low-order expansion coefficients.
\begin{lemma} \label{lem:etaBounds}
    Assume $V$ is $p  $-periodic and non-degenerate, and put $p_0 = \min \{p_1,\ldots,p_d\}$.
    We have:
    \begin{enumerate}[label={\rm(\alph*)}]
        \item For $r=0,1,2,\ldots,p_j-1$, $\eta_n^{(r)}(z)$ is real and independent of $z_j$.
        In particular, if $r<p_0$, then $\eta_n^{(r)}(z)$ is real and independent of $z$.
        \item For $r= p_j$, $\eta_n^{(r)}(z)$ is of the form
        \begin{equation} \label{etajp_0:cosineform}
            \eta_n^{(r)}(z) = \sum_{j:p_j=p_0} c_{j,n}(z_{j}+z_{j}^{-1}) + \Phi_n(z)
        \end{equation}
        where the $c_{j,n}$'s  are nonzero real constants independent of $z$ and  $\Phi_n$ is a Laurent polynomial with real coefficients depending solely on the variables $\{z_i : p_i < p_j\}$. In particular, if $p_j=p_0$, then $\Phi_n$ is a real constant.
\end{enumerate}
    If $p_0=1$, then $\Phi_n=0$ and $c_{j,n} = 1$ for each $j$ such that $p_j = p_0$.
\end{lemma}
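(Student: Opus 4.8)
The plan is to read everything off the loop expansion of Corollary~\ref{coro:loopExpansion}, applied to the Floquet matrices $A_\varepsilon(z) = V + \varepsilon\Delta(z)$, while keeping track of the $z$-degrees of the monomials that occur. The first step is to record the combinatorics of $\Delta(z)$: as an adjacency-type matrix on $\bbC^W$, a nearest-neighbour hop $n \mapsto n \pm e_j$ internal to the fundamental cell has weight $1$, while a hop crossing the cell boundary in direction $\pm e_j$ carries the factor $z_j^{\pm1}$ coming from $\psi(n + p_j e_j) = z_j\psi(n)$. When $p_j = 1$ such a hop is a self-loop contributing $z_j + z_j^{-1}$ to \emph{every} diagonal entry, so $\Delta(z) = c(z)I + \tilde\Delta(z)$ with $c(z) = \sum_{j:\,p_j=1}(z_j+z_j^{-1})$ and $\tilde\Delta(z)$ of vanishing diagonal (and independent of every $z_j$ with $p_j=1$). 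Since $V$ is non-degenerate, i.e.\ injective on $W$, Theorem~\ref{t:rayleigh} and Corollary~\ref{coro:loopExpansion} apply to $V + \varepsilon\tilde\Delta(z) = D + \varepsilon B$ with $D = V|_W$ and $B = \tilde\Delta(z)$; as the term $c(z)I$ merely shifts all eigenvalues, $\eta_n(\varepsilon,z) = \varepsilon c(z) + \tilde\eta_n(\varepsilon,z)$, whence $\eta_n^{(1)}(z) = c(z)$ (using $\tilde\eta_n^{(1)} = 0$) and $\eta_n^{(r)}(z) = \tilde\eta_n^{(r)}(z)$ for $r \neq 1$. Refining the graph underlying $\tilde\Delta(z)$ to the multigraph whose oriented edges are the individual hops --- each weighted by an element of $\{1, z_j, z_j^{-1}\}$ --- Corollary~\ref{coro:loopExpansion} gives $\tilde\eta_n^{(r)}(z) = \sum_{\gamma:\,n\to n,\,|\gamma|=r} f(\gamma,V)\,B_\gamma$, where now each $B_\gamma = \prod_j z_j^{w_j(\gamma)}$ is a Laurent monomial and each $f(\gamma,V)$ is a \emph{real} number (the coefficients of Corollary~\ref{coro:loopExpansion} are assembled inductively out of the reals $(V(n)-V(m))^{-1}$).

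The engine is a winding estimate: lifting a closed walk $\gamma$ of length $r$ from $W \cong \bbZ^d/p\bbZ^d$ to $\bbZ^d$, its net displacement $\sum_j w_j(\gamma)e_j$ lies in $p\bbZ^d$, so $w_j(\gamma) = k_j(\gamma)p_j$ for integers $k_j(\gamma)$; and since at least $|w_j(\gamma)| = |k_j(\gamma)|p_j$ of the $r$ steps point along $\pm e_j$, one obtains $\sum_j |k_j(\gamma)|p_j \leq r$. Part (a) is then immediate. For $0 \leq r < p_j$: either $r=0$ and $\eta_n^{(0)}(z)=V(n)$, or $1 \leq r < p_j$, which forces $p_j \geq 2$, so $c(z)$ (which involves $z_j$ only when $p_j=1$) does not involve $z_j$, and neither does $\tilde\eta_n^{(r)}$ since every length-$r$ loop has $k_j(\gamma)=0$; hence $\eta_n^{(r)}$ is independent of $z_j$. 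If moreover $r < p_0$, then all $k_l(\gamma)=0$, every $B_\gamma = 1$, and $\eta_n^{(r)}(z) = \sum_\gamma f(\gamma,V)$ is a $z$-independent real constant (reality also being evident because, for $z \in (\partial\bbD)^d$, $\eta_n^{(r)}(z)$ is a Taylor coefficient in $\varepsilon$ of the real eigenvalue $\eta_n(\varepsilon,z)$ of the self-adjoint $A_\varepsilon(z)$).

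For part (b), fix $j$ and consider a loop $\gamma$ of length $p_j$. The inequality $\sum_l |k_l(\gamma)|p_l \leq p_j$ leaves exactly two possibilities: \emph{either} $k_l(\gamma)=0$ for every $l$ with $p_l \geq p_j$, in which case $B_\gamma$ involves only the variables $\{z_l : p_l < p_j\}$; \emph{or} there is precisely one index $i$, necessarily with $p_i=p_j$, for which $k_i(\gamma)=\pm1$, all other $k_l(\gamma)=0$, and all $p_j$ steps of $\gamma$ run along direction $i$ with a common sign --- which forces $\gamma$ to be one of the two ``straight wrap'' loops $n \to n \pm e_i \to n \pm 2 e_i \to \cdots \to n$. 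These wrap loops are irreducible, so by \eqref{eq:fgamma:irrLoop} one has $B_\gamma = z_i^{\pm1}$ and $f(\gamma,V) = \prod_{\ell=1}^{p_i-1}\big(V(n)-V(n+\ell e_i)\big)^{-1}$, the two signs giving the same value because the reversed loop visits the same set of intermediate vertices. Gathering the first type of loop into $\Phi_n(z)$ --- a Laurent polynomial in $\{z_l : p_l < p_j\}$ with real coefficients, hence a real constant precisely when $p_j=p_0$ --- and the second type into the cosine terms yields \eqref{etajp_0:cosineform}, with $c_{i,n} = \prod_{\ell=1}^{p_i-1}\big(V(n)-V(n+\ell e_i)\big)^{-1}$; since the $p_i-1$ points $n+\ell e_i$ ($1 \leq \ell \leq p_i-1$) are distinct elements of $W$ different from $n$, non-degeneracy gives $c_{i,n} \neq 0$. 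Finally, when $p_0=1$ the order $r = p_0 = 1$ is handled by the identity $\eta_n^{(1)}(z) = c(z) = \sum_{j:\,p_j=1}(z_j+z_j^{-1})$ (valid since $\tilde\eta_n^{(1)}=0$), which is exactly the claimed form with $\Phi_n=0$ and $c_{j,n}=1$ for each $j$ with $p_j=1$.

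The main obstacle is bookkeeping rather than any single hard estimate: one must establish the winding inequality carefully for closed walks in the multigraph --- the step through which the period vector $p$ enters the conclusion --- then sort the length-$p_j$ loops according to the dichotomy above and verify that the two wrap loops in a fixed direction contribute with the same nonzero real weight. With those points in place, Lemma~\ref{lem:etaBounds} falls out of Corollary~\ref{coro:loopExpansion}.
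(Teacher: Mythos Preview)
Your argument is correct and follows exactly the route the paper indicates: the paper's proof simply cites Corollary~\ref{coro:loopExpansion} for $p_0\ge 2$ and Feynman--Hellmann for $p_0=1$, and you have spelled out the underlying winding combinatorics in full (your splitting $\Delta(z)=c(z)I+\tilde\Delta(z)$ for the $p_0=1$ case is essentially the Feynman--Hellmann computation in disguise). One small slip worth correcting: the $z_j$-exponent $w_j(\gamma)$ of $B_\gamma$ equals the winding number $k_j(\gamma)$ itself, not the net displacement $k_j(\gamma)p_j$---each boundary crossing contributes a single factor $z_j^{\pm1}$, and a closed walk with winding $k_j$ makes $k_j$ net crossings in direction $j$, not $k_jp_j$. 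This is purely notational and does not touch the argument, since the winding inequality $\sum_j|k_j(\gamma)|p_j\le r$ and the identification $B_\gamma=z_i^{\pm1}$ for the straight wrap loops are both stated correctly.
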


\begin{proof}
The statements in the case $p_0 \geq 2$ follow from Corollary~\ref{coro:loopExpansion} while the statements for $p_0=1$ follow from computations with the help of the Feynman--Hellmann theorem.
\end{proof}

This can then be put to use to get suitable estimates.

\begin{lemma} \label{lem:bounds}
    Assume $V:\bbZ^d \to \bbR$ is $p$-periodic and  non-degenerate, let $\rho_0>0$ be given, define $\varepsilon_0$ as in Lemma~\ref{lem:analyticity}, and let $p_0 := \min\{ p_1,\ldots,p_d\}$.
\begin{enumerate}[label={\rm(\alph*)}]
    \item \label{item:imaginaryPart}
     There is a constant $C_1 = C_1(d, p, \separate V, \rho_0) > 0$ such that
    \begin{equation}
        |\Im \eta_n(\varepsilon,z)| \leq C_1 \varepsilon^{ p_0}
    \end{equation}
    for all $|\varepsilon| \leq \varepsilon_0$ and $z \in \overline{\annulus(\rho_0)}$, and $n \in W$.
\medskip

\item \label{item:QepsBound}
There is a constant $C_2 = C_2(d, p, \separate V, \rho_0) > 0$ such that
\begin{equation}
\max\{\|I - Q_\varepsilon(z)\|,\|I-[Q_\varepsilon(z)]^{-1}\|\} \leq C_2\varepsilon
\end{equation}
for all $|\varepsilon| \leq \varepsilon_0$ and $z \in \overline{\annulus(\rho_0)}$.
\end{enumerate}
\end{lemma}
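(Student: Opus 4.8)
The plan is to derive both estimates from the loop/path expansions in Corollary~\ref{coro:loopExpansion} together with a uniform analyticity bound. First, for part~\ref{item:imaginaryPart}, I would invoke Lemma~\ref{lem:etaBounds}(a): for $r < p_0$ the coefficients $\eta_n^{(r)}(z)$ are \emph{real} (and $z$-independent), so they contribute nothing to $\Im \eta_n(\varepsilon,z)$ when $\varepsilon$ is real. Hence
\begin{equation}
    \Im \eta_n(\varepsilon,z) = \Im \sum_{r \geq p_0} \eta_n^{(r)}(z)\varepsilon^r,
\end{equation}
and it remains to show this tail is $O(\varepsilon^{p_0})$ uniformly. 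For that I would use Lemma~\ref{lem:analyticity}: on $(-2\varepsilon_0, 2\varepsilon_0) \times \annulus(2\rho_0)$ each $\eta_n(\varepsilon,z)$ is analytic, and on the compact set $\overline{\annulus(\rho_0)}$ (together with $|\varepsilon| \le \varepsilon_0$, which is strictly inside the disk of convergence of radius comparable to $2\varepsilon_0$) the eigenvalues are uniformly bounded, say by some $M = M(d,p,\separate V,\rho_0)$. A Cauchy estimate on the Taylor coefficients in $\varepsilon$ then gives $|\eta_n^{(r)}(z)| \leq M (2\varepsilon_0)^{-r}$ uniformly in $z \in \overline{\annulus(\rho_0)}$, so
\begin{equation}
    \Bigl| \sum_{r \geq p_0} \eta_n^{(r)}(z)\varepsilon^r \Bigr|
    \leq M \sum_{r \geq p_0} \Bigl(\frac{\varepsilon}{2\varepsilon_0}\Bigr)^r
    \leq 2M \Bigl(\frac{\varepsilon}{2\varepsilon_0}\Bigr)^{p_0}
    = C_1 \varepsilon^{p_0}
\end{equation}
for $|\varepsilon| \leq \varepsilon_0$, with $C_1 = 2M (2\varepsilon_0)^{-p_0}$ depending only on $d,p,\separate V,\rho_0$. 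Since $|\Im w| \le |w|$, this is the claim.

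For part~\ref{item:QepsBound}, the natural route is the eigenvector expansion from Theorem~\ref{t:rayleigh}: the columns of $Q_\varepsilon(z)$ are the analytic eigenvectors $u_n(\varepsilon,z) = e_n + \sum_{r \geq 1} u_n^{(r)}(z)\varepsilon^r$, normalized so $(u_n(\varepsilon,z))_n \equiv 1$. Thus $Q_\varepsilon(z) = I + \sum_{r\ge1}\varepsilon^r Q^{(r)}(z)$ where $Q^{(r)}$ has columns $u_n^{(r)}$; in particular $Q_0(z) = I$. Again by Lemma~\ref{lem:analyticity} and perturbation theory, $(\varepsilon,z) \mapsto Q_\varepsilon(z)$ is analytic on $(-2\varepsilon_0,2\varepsilon_0) \times \annulus(2\rho_0)$ (one may normalize the eigenvectors as above to remove the ambiguity), hence bounded on the compact set $\overline{\annulus(\rho_0)}$ for $|\varepsilon| \le \varepsilon_0$. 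A Cauchy estimate gives $\|Q^{(r)}(z)\| \le M' (2\varepsilon_0)^{-r}$ uniformly, so
\begin{equation}
    \|I - Q_\varepsilon(z)\| \leq \sum_{r \geq 1} \|Q^{(r)}(z)\| \, \varepsilon^r
    \leq M' \sum_{r\geq 1}\Bigl(\frac{\varepsilon}{2\varepsilon_0}\Bigr)^r \leq \frac{M'}{\varepsilon_0}\,\varepsilon
\end{equation}
for $|\varepsilon| \le \varepsilon_0$. For the inverse, one can either run the same argument on $[Q_\varepsilon(z)]^{-1}$ (which is likewise jointly analytic, as $Q_\varepsilon(z)$ is invertible throughout the region by simplicity of the spectrum, and equals $I$ at $\varepsilon = 0$), or, more cheaply, shrink $\varepsilon_0$ if necessary so that $\|I - Q_\varepsilon(z)\| \le \tfrac12$ and use the Neumann series $[Q_\varepsilon(z)]^{-1} = \sum_{k\ge0}(I - Q_\varepsilon(z))^k$, which yields $\|I - [Q_\varepsilon(z)]^{-1}\| \le \|I-Q_\varepsilon(z)\|/(1-\|I-Q_\varepsilon(z)\|) \le 2\|I - Q_\varepsilon(z)\|$. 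Taking $C_2$ to be twice the constant from the $Q_\varepsilon$ bound finishes the proof.

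The main obstacle is not any single estimate but making the \emph{uniformity in $z$} airtight: one needs that the radius of convergence of the $\varepsilon$-expansions, and the sup-norm bounds $M, M'$ used in the Cauchy estimates, can be chosen uniformly over the compact set $\overline{\annulus(\rho_0)}$. This is exactly what Lemma~\ref{lem:analyticity} buys us, since it gives analyticity on the \emph{open} set $(-2\varepsilon_0,2\varepsilon_0)\times\annulus(2\rho_0)$, which strictly contains the closed region $\{|\varepsilon|\le\varepsilon_0\}\times\overline{\annulus(\rho_0)}$; a standard compactness argument then produces the uniform bounds. A minor point to check carefully is the normalization of the eigenvectors used to define $Q_\varepsilon(z)$ — the expansion in Theorem~\ref{t:rayleigh} fixes $(u_n)_n \equiv 1$, and one should confirm this choice matches the $Q_\varepsilon(z)$ appearing in \eqref{eq:QepsilonConjugatesAepsilon} (or simply adopt it as the definition), so that $Q_0(z) = I$ and the $O(\varepsilon)$ bound is meaningful.
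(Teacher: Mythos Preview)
Your proposal is correct and follows essentially the same approach as the paper: for (a) you use Lemma~\ref{lem:etaBounds} to see that the coefficients $\eta_n^{(r)}$ with $r<p_0$ are real constants, so the imaginary part is carried entirely by the $\varepsilon^{p_0}$-tail, and then uniform analyticity from Lemma~\ref{lem:analyticity} bounds that tail; for (b) you use $Q_0(z)=I$ and analyticity. The paper's proof is terser (``follows presently'', ``follows directly'') and points to the recursive formulas in Theorem~\ref{t:rayleigh} to track the constant's dependence on $d,p,\separate V,\rho_0$, whereas you make the Cauchy-estimate/compactness step explicit; both routes are the same argument at heart, and your concern about the eigenvector normalization is already handled by the paper's choice after \eqref{eq:QepsilonConjugatesAepsilon}.
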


\begin{proof}
\ref{item:imaginaryPart}
On account of Lemma~\ref{lem:etaBounds}, we deduce that
\begin{equation}
    \eta_n(\varepsilon,z) 
    = V(n) +  \sum_{r=1}^\infty \eta_n^{(r)}(z)\varepsilon^r,
\end{equation}
where the $\eta_n^{(r)}$ are analytic and each is real and constant for $r < p_0$.
The desired bound follows presently.
The dependence of the constant follows from the form of $\eta_n^{(r)}$ in Theorem~\ref{t:rayleigh} and induction.
\bigskip

\ref{item:QepsBound}   Noting that $Q_\varepsilon(z)$ is  analytic as a function of $(\varepsilon,z)$ with $Q_0 = I$, this follows directly.
The dependence of the constant follows from the form of $u_n^{(r)}$ in Theorem~\ref{t:rayleigh} and induction.
\end{proof}

\subsection{Proofs of Main Results}

We now put everything together to prove the main results.

\begin{proof}[Proof of Theorem~\ref{t:asymptotic}]
From the definitions, we see that
\begin{equation} \label{eq:vAsyRelationGi}
    v_\asy(A_\varepsilon,\psi) = \|(G_1\psi,\ldots,G_d\psi)\|,
\end{equation}
where $G_i$ denotes the $i$th component of the group velocity, which in turn is given by:
    \begin{align} \nonumber
        G_i&:= \operatorname{s-lim}_{t \to \infty} \frac{1}{t} X_i(t)
         = \operatorname{s-lim}_{t \to \infty} \frac{1}{t} e^{\iop t A_\varepsilon}X_i e^{-\iop t A_\varepsilon} \\
        & \qquad = \frac{p_i}{|\bbT^d|}  \floquet^{-1} \left[ \int^\oplus_{\bbT^d} \sum_n \frac{\partial \eta_n(\varepsilon,e^{\iop \theta})}{\partial \theta_i} \Pi_n(\varepsilon, e^{\iop \theta}) \, \rmd \theta  \right] \floquet
        \label{eq:vasy}
    \end{align}
    in the strong sense, where we recall 
    $X_i$ denotes the position operator $X_i \psi(x) = x_i \psi(x)$, $\eta_n(\varepsilon,e^{\iop \theta})$ denotes the eigenvalue of $A_\varepsilon(e^{\iop \theta})$ that is analytically continued from $V(n)$, and $\Pi_n(\varepsilon, e^{\iop \theta})$ denotes projection onto the corresponding eigenspace; see e.g.\ \cite{Fillman2021OTAA} for a proof.
    In particular, $G_i$ is, up to a unitary conjugation, given by a generalized multiplication operator and therefore
    \begin{equation} \label{eq:GiNorm}
        \|G_i\|
        = p_i \max\left\{ \left| \frac{\partial \eta_n}{\partial \theta_i}(\varepsilon, e^{\iop \theta}) \right| : \theta \in \bbT^d, \ n \in W \right\}.
    \end{equation}

     For each $i$, we note from Lemma~\ref{lem:etaBounds} that one has
    \begin{equation} \label{eq:eta_nDerivExpansion}
        \frac{\partial \eta_n}{\partial \theta_i} = 
            2c_{i,n} \sin( \theta_i) \varepsilon^{p_i} + O(\varepsilon^{p_i + 1}) ,
    \end{equation}
    with $c_{i,n} \neq 0 $.
    Putting together \eqref{eq:GiNorm} and \eqref{eq:eta_nDerivExpansion}, 
    \begin{equation} \label{eq:GinormExpansion}
        \|G_i\| = \widetilde c_i \varepsilon^{p_i} + O(\varepsilon^{p_i+1}), \quad i = 1,2,\ldots, d,
    \end{equation}
    where $\widetilde c_i = 2p_i \max_n |c_{i,n}| \neq 0$.

Putting together \eqref{eq:GinormExpansion} and \eqref{eq:vAsyRelationGi} (and taking the maximum over $i$) gives $v_\asy(A_\varepsilon) = c\varepsilon^{p_0} + O(\varepsilon^{p_0+1})$.
The result for $v_\asy(H_\mu)$ in \eqref{eq:vAsyBounds} follows after rescaling $H_\mu = \mu A_{1/\mu}$.
The result for $v_\asy(H_\mu, \delta_0)$ follows from the computations above and the observations $\floquet \delta_0 = e_0$ and $\Pi_n = e_ne_n^* + O(\varepsilon)$, viz:
\begin{align*}
    \|G_i\delta_0\|^2
    & = 
     p_i^2   \left\| \left[ \int^\oplus_{\bbT^d} \sum_n \frac{\partial \eta_n(\varepsilon,e^{\iop \theta})}{\partial \theta_i} \Pi_n(\varepsilon, e^{\iop \theta})  \, \frac{\rmd \theta}{|\bbT^d|}  \right] e_0 \right\|^2 \\
     & =
     p_i^2     \int_{\bbT^d}  \sum_n\left\| \frac{\partial \eta_n(\varepsilon,e^{\iop \theta})}{\partial \theta_i} \Pi_n(\varepsilon, e^{\iop \theta}) e_0 \,  \right\|^2 \frac{\rmd \theta}{|\bbT^d|} \\
     & =
     p_i^2     \int_{\bbT^d}  \left| \frac{\partial \eta_0(\varepsilon,e^{\iop \theta})}{\partial \theta_i} \right|^2(1 + O(\varepsilon))\, \frac{\rmd \theta}{|\bbT^d|} \\
     & =
     p_i^2     \int_{\bbT}  \left| 2c_{i,0} \sin\theta_i \varepsilon^{p_i} + O(\varepsilon^{p_i+1})\right|^2(1 + O(\varepsilon))\,  \frac{\rmd \theta_i}{|\bbT|} \\
     & = 4p_i^2c_{i,0}^2(\tfrac{1}{2}) \varepsilon^{2p_i} + O(\varepsilon^{2p_i+1}),
\end{align*}
which shows $v_\asy(A_\varepsilon,\delta_0) = c \varepsilon^{p_0} + O(\varepsilon^{p_0+1})$ after taking the maximum over $i$.
Rescaling to $H_\mu = \mu A_{1/\mu}$ as before concludes the argument.
\end{proof}

\begin{proof}[Proof of Theorem~\ref{t:LiebRob}]
Fix $\rho_0>0$, define $\varepsilon_0$ as in Lemma~\ref{lem:analyticity}, and let us consider $A_\varepsilon = \varepsilon \Delta+V$ with $|\varepsilon| \leq \varepsilon_0$. 
In order to bound (block) matrix elements,
we use \eqref{eq:duhamel} and use the substitution $z_j = \exp(\iop \theta_j)$, $\rmd z_j/z_j = \iop \, \rmd \theta_j$ to transform to an iterated contour integral:
\begin{align}
\nonumber
e^{-{\iop}tA_\varepsilon}(x,y)
& =  \int_{\bbT^d} e^{-{\iop} t A_\varepsilon(e^{ {\iop} \theta})} e^{-{\iop}\langle (x-y), \theta \rangle} \, \frac{\rmd\theta}{|\bbT^d|} \\
\label{eq:unitaryGroupIteratedContour}
& = \frac{1}{\iop^d|\bbT^d|}
 \oint_{|z_1|=1} \cdots \oint_{|z_d|=1}
e^{-\iop t A_\varepsilon(z)} z^{-(x-y+\mathbf{1})}\, \rmd z,
\end{align}
where we used the multi-index notation $z^n = z_1^{n_1}\cdots z_d^{n_d}$ for $z \in (\bbC^*)^d$, $n \in \bbZ^d$, and $\mathbf{1}=(1,1,\ldots,1)$.
\bigskip

\noindent\textbf{Upper Bound.}
The first step is to deform the contours as follows:
for each $1 \le j \le d$, define
\begin{equation}
    \sigma_j 
    = \sgn(x_j-y_j) 
    = \begin{cases}
        -1 & x_j < y_j, \\ 
        0  & x_j = y_j, \\
        1  & x_j > y_j.
    \end{cases}
\end{equation}
We then shall pass to the contours $|z_j|=e^{\sigma_j \rho_0}$.
Since the integrand is holomorphic in $\annulus(2\rho_0)$, we have
\begin{align*}
& \frac{1}{\iop^d|\bbT^d|}
 \oint_{|z_1|=1} \cdots \oint_{|z_d|=1}
e^{-\iop t A_\varepsilon(z)} z^{-(x-y+\mathbf{1})}\, \rmd z \\
& \qquad  = \frac{1}{\iop^d|\bbT^d|} \oint_{|z_1| = e^{\sigma_1 \rho_0}} 
\cdots \oint_{|z_d|=e^{\sigma_d \rho_0}}
e^{-\iop t A_\varepsilon(z)} z^{-(x-y+\mathbf{1})}\, \rmd z.
\end{align*}
Estimating, diagonalizing, and using Lemma~\ref{lem:bounds}.\ref{item:QepsBound} to estimate $\|Q_\varepsilon^{\pm 1}(z)\|$, we deduce for $t \geq 0$:
\begin{align*}
    & \|e^{-{\iop}tA_\varepsilon}(x,y)\| \\
    & \qquad \leq  \frac{1}{|\bbT^d|}  \oint_{|z_1| = e^{\sigma_1 \rho_0}} 
    \cdots \oint_{|z_d| = e^{\sigma_d \rho_0}} \|Q(z)\|\| \diag(\exp(-{\iop}t \eta_n(\varepsilon, z)))\| \times \cdots \\ & \hspace{2in} \cdots \times\| [Q(z)]^{-1}\|  |z|^{-(x-y+\mathbf{1})}\, |\rmd z|, \\
    & \qquad\leq 
    (1+C_2\varepsilon+O(\varepsilon^2))^2  \exp\left(t \max_{n,z} |\Im \eta_n(\varepsilon,z)| \right)\prod_{j=1}^d e^{-\rho_0|x_j-y_j|}.
\end{align*}
Using Lemma~\ref{lem:bounds}.\ref{item:imaginaryPart}, we deduce
\begin{align}
   \|e^{-{\iop}tA_\varepsilon}(x,y)\|\leq  C   e^{C_1|t|\varepsilon^{p_0}} e^{-\rho_0|x-y|_1},
\end{align}
which concludes the proof of the upper bound after rescaling to $H_\mu = \mu A_{1/\mu}$ and using Remark~\ref{rem:blocksToEntries} to pass from blocks to matrix elements.
\bigskip

\noindent  \textbf{Lower Bound.} The lower bound follows readily from \eqref{eq:vAsydeltaBounds} and a straightforward generalization of \cite[Theorem~A.1]{ADFS} to higher dimensions (the formulation in \cite{ADFS} is for $d=1$, but the proof applies in higher dimensions with cosmetic changes).
\end{proof}

\bibliographystyle{abbrv}
\bibliography{refs}

\end{document}